\theoremstyle{plain}
\newtheorem{theorem}{Theorem}
\newtheorem{lemma}{Lemma}
\newtheorem{proposition}{Proposition}
\theoremstyle{definition}
\newtheorem{definition}{Definition}
\newtheorem{example}{Example}
\newcommand{\bd}{\begin{definition}}
\newcommand{\ed}{\end{definition}}
\newcommand{\bt}{\begin{theorem}}
\newcommand{\et}{\end{theorem}}
\newcommand{\bn}{\begin{proposition}}
\newcommand{\en}{\end{proposition}}
\newcommand{\be}{\begin{equation}}
\newcommand{\ee}{\end{equation}}
\newcommand{\blem}{\begin{lemma}}
\newcommand{\elem}{\end{lemma}}
\newcommand{\bx}{\begin{example}}
\newcommand{\ex}{\end{example}}
\newcommand{\bprf}{\begin{proof}}
\newcommand{\eprf}{\end{proof}}
\newcommand\define[1]{\emph{\textbf{#1}}}
\DeclareMathAlphabet{\mathpzc}{OT1}{pzc}{m}{it} 
 \DeclareFontFamily{OT1}{pzc}{}
 \DeclareFontShape{OT1}{pzc}{m}{it}{ <-> s*[1.2] pzcmi7t }{}
 \DeclareMathAlphabet{\mathpzc}{OT1}{pzc}{m}{it}
 \newcommand{\Alg}[1]{\mathpzc{#1}}
\newcommand{\map}[1]{\mathcal{#1}}
\newcommand{\Ad}{\mathrm{Ad}}
\def\R{{{\mathbb R}}}
\def\C{{{\mathbb C}}}
\newcommand{\<}{\langle}
\renewcommand{\>}{\rangle}
\newcommand{\Tr}{\operatorname{Tr}}
\newcommand{\Choi}{\mathscr{C}}
\def\A{\Alg{A}}
\def\B{\Alg{B}}
\def\M{\mathbb{M}}
\def\E{\mathcal{E}}
\def\H{\mathcal{H}}
\def\O{\mathscr{O}}
\def\J{\mathscr{J}}
\def\C{\mathbb{C}}
\def\R{\mathbb{R}}
\begin{document}									
\preprint{APS/123-QED}

\title{Partial transpose as a space-time swap}

\author{James Fullwood}
\email{fullwood@hainanu.edu.cn}
\affiliation{School of Mathematics and Statistics, Hainan University, Haikou, Hainan Province, 570228, China.}
\author{Junxian Li}
\affiliation{School of Mathematics and Statistics, Hainan University, Haikou, Hainan Province, 570228, China.}

\date{\today}

\begin{abstract}
While the partial transpose operation appears in many fundamental results in quantum theory---such as the Peres-Horodecki criterion for entanglement detection---a physical interpretation of the partial transpose is lacking. In this work, we show that a partial transpose of a bipartite density operator is a two-time \emph{pseudo}-density operator, which by definition encodes temporal correlations associated with two-point sequential measurement scenarios. As such, it follows that partial transposition admits a precise physical interpretation as mapping spatial correlations to temporal correlations, thus swapping the roles of space and time for bipartite quantum systems. For maximally entangled qubits, we show that partial transposition maps spatial correlations which violate Bell inequalities to causal correlations which cannot be replicated by spacelike separated systems, thus further solidifying the interpretation of partial transpose as a space-time swap. As it is known that gravitational effects swap the roles of space and time inside a black hole, our results suggest that at a quantum mechanical level, a traversal of a black hole's event horizon by a bipartite quantum system may be described by a partial transpose.
\end{abstract}

	\maketitle

\section{Introduction}

Quantum information theory uses completely positive, trace-preserving (CPTP) linear maps to model the dynamics of open quantum systems. Typically the system of interest $A$ is interacting with a large environment such as a heat bath or a sea of photons, and in such a case, the CPTP map then models the noise induced by the system-environment interaction. The complete positivity of a CPTP map $\E$ then ensures that if we view the system $A$ as being a subsystem of a composite system $AC$, then the mapping $\E\otimes \mathcal{I}_C$ also represents a valid physical transformation of the composite system $AC$, where $\mathcal{I}_C$ is the identity channel on $C$. 

While the map $\rho\mapsto \rho^T$ taking a state to its transpose is a positive map, it is not completely positive, since tensoring the transpose map with an identity map---and thus forming a \emph{partial} transpose---is no longer positive. As such, it follows that even though the transpose map takes states to states, such a mapping does not represent an implementable quantum operation on a physical system. Although this fact is well known, a more conceptual explanation for why a partial transpose map is not positive is lacking. Moreover, as the partial transpose plays a fundamental role in the study of entanglement~\cite{Peres_96,Horodecki_96,Werner_2001,Hillery_2006,Toth_2009}, a precise physical interpretation of the partial transpose may lead to new insights on long-standing open problems, such as whether or not there exists bound entangled states with negative partial transpose~\cite{Horodecki_2022}.  

In Ref.~\cite{Marletto_2020}, it was shown that a partial transpose of a Bell state corresponding to two maximally entangled qubits is a \emph{pseudo-density matrix}, which is a generalization of a density matrix which encodes temporal correlations between timelike separated systems~\cite{FJV15,HLiu_2025,MVVAPGDG21,Marletto_2019,Pisar19,ZPTGVF18,song23,FuPa24,Liu_2024,Liu_2025}. It then follows that in the context of two maximally entangled qubits, a partial transpose has the effect of transforming spatial correlations into timelike correlations, thus swapping the roles of space and time. 

In this work, we show that viewing a partial transpose as a space-time swap holds not only for two maximally entangled qubits, but also for bipartite systems consisting of an arbitrary number of qubits, regardless of whether or not the two systems are entangled. In particular, we show that if $\rho_{AB}$ is a bipartite density matrix representing the state of a composite system $AB$, then the partial transpose $\rho_{AB}^{T_B}\equiv T_B(\rho_{AB})$ encodes temporal correlations associated with a two-point sequential measurement scenario consisting of the following protocol: (i) Alice prepares the state $\rho_A=\Tr_{B}[\rho_{AB}]$. (ii) Alice measures a multi-qubit Pauli observable $\sigma_{\alpha}$. (iii) Alice sends the updated state to Bob via a quantum channel $\E$ determined by the bipartite state $\rho_{AB}$. (iv) Bob measures a multi-qubit Pauli observable $\sigma_{\beta}$ on the output of the channel $\E$. If we then form the two-time correlator $\<\sigma_{\alpha},\sigma_{\beta}\>$ corresponding to the expectation value of the product of Alice and Bob's measurements, we show  
\be \label{CRLXFCT87}
\<\sigma_{\alpha},\sigma_{\alpha}\>=\Tr\Big[\rho_{AB}^{T_B}\,(\sigma_{\alpha}\otimes \sigma_{\beta})\Big]\, ,
\ee
thus the partial transpose $\rho_{AB}^{T_B}$ may be viewed as a spatiotemporal quantum state encoding the temporal correlations between the timelike separated systems $A$ and $B$. 
As for the partial transpose $\rho_{AB}^{T_A}\equiv T_A(\rho_{AB})$, we show that it also encodes temporal correlations between $A$ and $B$, but in the slightly modified scenario where Alice instead prepares the state $\rho_A^T$, and the channel responsible for the dynamics of the system between measurements is $\overline{\E}=T\circ \E\circ T$, where $T$ denotes the transpose map. 

If we view the channel $\overline{\E}$ as the channel $\E$ but with an opposite orientation in time, then a compelling picture emerges where the partial transpositions $T_B$ and $T_A$ correspond to reflections of the composite system $AB$ about the future and past light cones of an auxiliary system, a sketch of which is provided in Fig.~\ref{F1}. On the other hand, if we fix the systems $A$ and $B$ and view the spacetime swap as a passive transformation, then the light cone in Fig.~\ref{F1} is rotated $90^{\text{o}}$ under the partial transpositions $T_A$ and $T_B$, which is reminiscent of the way in which gravitational effects inside a black hole tilt light cones toward the singularity. As the Schwarzschild metric in particular literally swaps the roles of space and time, our results suggest that at a quantum mechanical level, a traversal of a Schwarzschild event horizon by a bipartite quantum system may be described by a partial transpose.

\begin{figure}\label{F1}
    \centering
    \includegraphics[width=0.7\columnwidth]{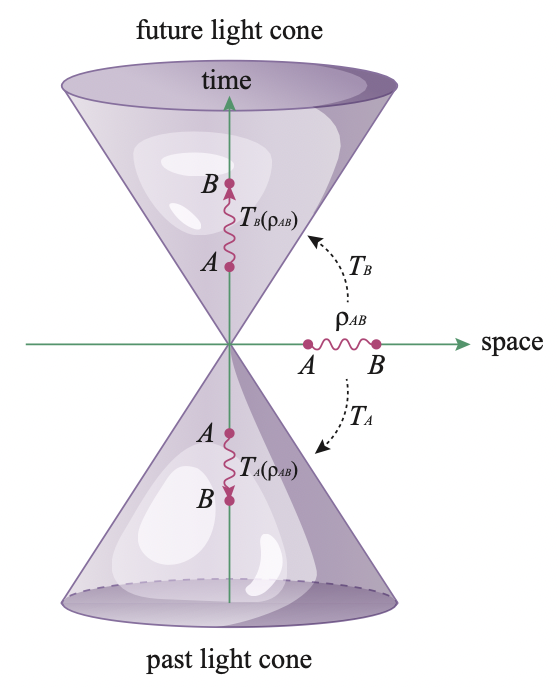}
    \caption{The partial transpositions $T_A$ and $T_B$ as reflections about the future and past light cones of an auxiliary system.}
    \label{fig1}
\end{figure}

As an illustration of our results, we show that in the context of Bell scenarios where Alice and Bob are performing randomized measurements on a maximally entangled state of two qubits, the partial transpositions $T_B$ and $T_A$ both map spatial correlations which violate Bell inequalities to temporal correlations which violate temporal analogs of Bell inequalities~\cite{BTCV04,Fritz10}. In particular, we show that partial transposition maps non-local correlations in space to causal correlations which may not be replicated by spacelike separated systems, thus further solidifying the interpretation of the partial transpose as a space-time swap.

\section{Two-time correlators}

Throughout this work, $A$ and $B$ will denote multi-qubit systems with Hilbert spaces $\H_A$ and $\H_B$, respectively. The algebra of linear operators on $\H_A$ will be denoted by $\A$ and the algebra of linear operators on $\H_B$ will be denoted by $\B$. A linear map $\E:\A\to \B$ which is completely positive and trace-preserving (CPTP) will be referred to as a \define{quantum channel}.

If system $A$ consists of $m$ qubits, then an observable of the form $\sigma_{\alpha}=\sigma_{\alpha_1}\otimes \cdots \otimes \sigma_{\alpha_{m}}$ with $\alpha_i\in \{0,\ldots,3\}$ will be referred to as a \define{Pauli observable} on $A$, and similarly for system $B$. A \define{two-point sequential measurement} of Pauli observables on $A$ and $B$ then consists of the following protocol:
\begin{itemize}
\item
Alice prepares system $A$ in state $\rho_A$.
\item
Alice then measures a Pauli observable $\sigma_{\alpha}$.
\item
Alice sends the output of her measurement to Bob via a quantum channel $\E:\A\to \B$.
\item
Bob measures a Pauli observable $\sigma_{\beta}$ on the output of the channel $\E:\A\to \B$.
\end{itemize}
Such a two-point sequential measurement (TPSM) scenario will be denoted by $(\rho_A,\sigma_{\alpha},\E,\sigma_{\beta})$, and in such a case, the \define{two-time correlator} corresponding to the expectation value of the \emph{product} of Alice and Bob's measurements is the real number $\<\sigma_{\alpha},\sigma_{\beta}\>$ given by
\be \label{2TXCRX57}
\<\sigma_{\alpha},\sigma_{\beta}\>=\Tr[\E(\Pi_{\alpha}^{+}\rho_A\Pi_{\alpha}^{+})\sigma_{\beta}]-\Tr[\E(\Pi_{\alpha}^{-}\rho_A\Pi_{\alpha}^{-})\sigma_{\beta}]\, ,
\ee
where $\Pi_{\alpha}^{\pm}$ are the projectors onto the $\pm 1$-eigenspaces of $\sigma_{\alpha}$, so that $\sigma_{\alpha}=\Pi_{\alpha}^{+}-\Pi_{\alpha}^{-}$. The two-time correlator $\<\sigma_{\alpha},\sigma_{\beta}\>$ will be denoted by $\<\sigma_{\alpha},\sigma_{\beta}\>_{(\rho_A,\E)}$ when it is important to emphasize its dependence on the initial state $\rho_A$ and the channel $\E:\A\to \B$. 

Due to the fact that the collection of Pauli observables $\{\sigma_{\alpha}\otimes \sigma_{\beta}\}$ forms an orthogonal basis of observables on the composite system $AB$, it follows that there exists a unique operator $\varrho_{AB}$ such that
\be \label{PDMEXT71}
\<\sigma_{\alpha},\sigma_{\beta}\>=\Tr[\varrho_{AB}(\sigma_{\alpha}\otimes \sigma_{\beta})]
\ee
for all Pauli observables $\sigma_{\alpha}$ and $\sigma_{\beta}$. It is straightforward to show that the operator $\varrho_{AB}$ is then given by
\be \label{PDMDFX87}
\varrho_{AB}=\E\star \rho_A\equiv \frac{1}{2}\Big\{\rho_A\otimes \mathds{1},\J[\E]\Big\}\, ,
\ee
where $\{\cdot,\cdot\}$ denotes the anticommutator and $\J[\E]=\sum_{i,j}|i\>\<j|\otimes \E(|j\>\<i|)$ is the \define{Jamio\l kowski operator} of the channel $\E:\A\to \B$. While the operator $\varrho_{AB}$ as given by \eqref{PDMDFX87} is Hermitian and of unit trace, it is not positive semi-definite in general, and thus will be referred to as a \define{pseudo-density operator}. 

If the pseudo-density operator $\varrho_{AB}$ is in fact positive semi-definite, then the correlations between $A$ and $B$ admit a dual description where $A$ and $B$ are spacelike separated and Alice and Bob are performing measurements in parallel. Such states will be referred to as \define{dual states}. Otherwise, the negative eigenvalues of a pseudo-density operator serve as a witness to temporal correlations which may not be realized by spacelike separated systems~\cite{FJV15}. 

\section{The partial transpose theorem}

In this section we present our main result, namely, that the partial transpositions of a bipartite density operator $\rho_{AB}$ are both pseudo-density operators, and thus encode temporal correlations associated with a TPSM scenario. For this, we fix orthonormal bases $\{|i\>\}$ of $\H_A$ and $\{|j\>\}$ of $\H_B$, so that we may identify $\A$ and $\B$ with algebras of matrices, thus yielding a well-defined notion of the partial transpositions $\rho_{AB}^{T_A}$ and $\rho_{AB}^{T_B}$ for every operator $\rho_{AB}\in \A\otimes \B$. 

\bt[The Partial Transpose Theorem] \label{MTX1}
Let $\rho_{AB}\in \A\otimes \B$ be a bipartite density operator. Then there exists a quantum channel $\E:\A\to \B$ such that 
\be \label{PTTXE57}
\rho_{AB}^{T_B}=\E\star \rho_A \quad \text{and} \quad \rho_{AB}^{T_A}=\overline{\E}\star \rho_A^T\, ,
\ee
where $\rho_A=\Tr_B[\rho_{AB}]$ and $\overline{\E}$ is the complex conjugate of $\E$, which is the channel given by $\overline{\E}=T\circ \E\circ T$, with $T$ the transpose map. Moreover, if $\rho_A$ is of full rank, then the channel $\E$ is unique.
\et

\bprf[Sketch of the proof]
To show there exists a quantum channel $\E:\A\to \B$ such that $\rho_{AB}^{T_B}=\E\star \rho$, we first show that there exists a solution $X$ of the Sylvester equation 
\[
(\rho_A\otimes \mathds{1})X+X(\rho_A\otimes \mathds{1})=2\rho_{AB}^{T_B}\, ,
\]
from which it follows that $\rho_{AB}^{T_B}=\E\star \rho$, where $\E=\J^{-1}[X]$ and $\J^{-1}$ is the inverse of the Jamio\l kowski isomorphism $\J$. To show $\E$ is CPTP, we use the fact that $\E$ is completely positive (CP) if and only if the Choi matrix $\mathscr{C}[\E]=X^{T_A}$ is positive semi-definite~\cite{Ch75}, together with the fact that $\E$ is trace-preserving if and only if $\Tr_B[X]=\mathds{1}$. The full details showing that $\mathscr{C}[\E]$ is positive semi-definite and $\Tr_B[X]=\mathds{1}$ are given in the Supplemental Material.  
\eprf

When combined with Eqs.~\eqref{PDMEXT71} and \eqref{PDMDFX87}, the Partial Transpose Theorem yields a precise physical interpretation of the partial transpose maps $T_A$ and $T_B$. In particular, it follows from the Partial Transpose Theorem that for any bipartite density operator $\rho_{AB}$, there exists a quantum channel $\E:\A\to \B$ such that 
\be \label{EQX1}
\<\sigma_{\alpha},\sigma_{\beta}\>_{(\rho_A,\E)}=\Tr\Big[\rho_{AB}^{T_B}(\sigma_{\alpha}\otimes \sigma_{\beta})\Big]
\ee
for all TPSM scenarios $(\rho_A,\sigma_{\alpha},\E,\sigma_{\beta})$, and
\be \label{EQX2}
\<\sigma_{\alpha},\sigma_{\beta}\>_{(\rho_A^T,\overline{\E})}=\Tr\Big[\rho_{AB}^{T_A}(\sigma_{\alpha}\otimes \sigma_{\beta})\Big]
\ee
for all TPSM scenarios $(\rho_A^T,\sigma_{\alpha},\overline{\E},\sigma_{\beta})$. As such, the partial transpositions $T_A$ and $T_B$ both transform the spatial correlations encoded by a bipartite density operator $\rho_{AB}$ into temporal correlations associated with explicit TPSM scenarios determined by $\rho_{AB}$.

We note that it immediately follows from the Partial Transpose Theorem that a bipartite state $\rho_{AB}$ with positive partial transpose is necessarily a dual state, i.e., $\rho_{AB}=\E\star \rho_A$ for some quantum channel $\E$. Indeed, if $\rho_{AB}^{T_B}\geq 0$, then there exists a bipartite density operator $\sigma_{AB}$ such that $\rho_{AB}^{T_B}=\sigma_{AB}$. Since by the Partial Transpose Theorem we have $\sigma_{AB}^{T_B}=\E\star \rho_A$ for some quantum channel $\E$, we then have
\[
\rho_{AB}=\sigma_{AB}^{T_B}=\E\star \rho_A\, ,
\]
thus $\rho_{AB}$ is a dual state. This fact was also recently discovered in Ref.~\cite{song25}, which gives a nice characterization of dual states by giving an explicit decomposition of the channel that realizes their temporal description.

\section{Swapping spatial and temporal Bell inequality violations}
In this section, we show that partial transposition maps correlations that violate spatial Bell inequalities to correlations that violate temporal Bell inequalities, thus mapping non-local correlations in space to causal correlations in time. For this, we first consider a spatial Bell scenario where Alice and Bob perform spacelike separated measurements on a physical system with binary outcomes labeled by $\pm 1$, with Alice and Bob each performing one of two measurements according to some randomization procedure such as flipping a coin. By assuming that nature satisfies local realism~\cite{EPR,Bell64,Fullwood_2025}---namely, that there are no superluminal influences between Alice and Bob's measurements, and that the values of their measurement outcomes have well-defined, pre-existing values independent of measurement--- one may derive constraints on the joint statistics of Alice and Bob's measurements often referred to as \emph{Bell inequalities}. In particular, if we denote Alice's measurements by $Q$ and $R$ and Bob's measurements by $S$ and $T$, then the assumption of local realism implies the Bell inequality
\be \label{BLXEQX71}
|E_{QS}+E_{RS}+E_{RT}-E_{QT}|\leq 2\, ,
\ee
where $E_{QS}$ is the expectation value of the product of Alice and Bob's measurements for runs of the experiment where Alice measures $Q$ and Bob measures $S$, and similarly for $E_{RS}$, $E_{RT}$ and $E_{QT}$.

It is well known that if Alice and Bob share a maximally-entangled state of two qubits, then it is possible to choose $Q$, $R$, $S$ and $T$ to be Pauli observables such that the inequality \eqref{BLXEQX71} is violated, thus implying that nature is not locally real. For example, such a maximally entangled state $\rho_{AB}$ may be written as
\be \label{MAXENT91}
\rho_{AB}=\frac{1}{4}\left(\mathds{1}\otimes \mathds{1}+\sum_{i=1}^{3}a_i(\sigma_i\otimes \sigma_i)\right)\, ,
\ee
where $\sigma_1=\sigma_x$, $\sigma_2=\sigma_y$, and $\sigma_3=\sigma_z$, with $a_1=a_3=\pm 1$ and $a_2=-1$. When $a_1=a_3=1$ the spins $\sigma_i$ on Alice and Bob's systems are perfectly correlated for $i=1,3$, and when $a_1=a_3=-1$ the spins $\sigma_i$ are perfectly \emph{anti}-correlated for $i=1,3$. The spins $\sigma_2$ are always perfectly anti-correlated since $a_2=-1$. After setting $Q=\sigma_3$, $R=\sigma_1$, $S=-(\sigma_3+\sigma_1)/\sqrt{2}$ and $T=(\sigma_3-\sigma_1)/\sqrt{2}$, we then have
\be \label{BEX571}
E_{QS}+E_{RS}+E_{RT}-E_{QT}=\Tr[\rho_{AB}\O]\, ,
\ee
where 
\begin{align*}
\O&=Q\otimes S+R\otimes S+R\otimes T-Q\otimes T \\
&=\frac{-2}{\sqrt{2}}(\sigma_1\otimes \sigma_1+\sigma_3\otimes \sigma_3)\, .
\end{align*}
Since $\O$ does not involve the term $\sigma_2\otimes \sigma_2$, it immediately follows from Eq.~\eqref{MAXENT91} that
\[
\Tr[\rho_{AB}\O]=-\sqrt{2}(a_1+a_3)=\pm 2\sqrt{2}\, ,
\]
where the final equation follows from the fact that $a_1=a_3=\pm 1$. It then follows from \eqref{BEX571} that
\[
E_{QS}+E_{RS}+E_{RT}-E_{QT}=\pm 2\sqrt{2}\, ,
\]
thus violating the Bell inequality \eqref{BLXEQX71}.

We now consider the temporal analog of such a Bell scenario, i.e., where Alice and Bob perform measurements according to a TPSM scenario $(\rho_A,\sigma_{\alpha},\E,\sigma_{\beta})$, where $\rho_{A}=\Tr_B[\rho_{AB}]$, and $\sigma_{\alpha}\in \{Q,R\}$ and $\sigma_{\beta}\in \{S,T\}$ are chosen according to the same randomization procedure as in the spatial case. As for the channel $\E$, we find by direct calculation using Theorem~\ref{MTX1} that it is the identity channel when $a_1=a_3=1$, and when $a_1=a_3=-1$, $\E$ is the channel which fixes $\mathds{1}$ and $\sigma_2$ while sending $\sigma_i$ to $-\sigma_i$ for $i=1,3$. In such a case we then have $E_{QS}=\<Q,S\>$, $E_{RS}=\<R,S\>$, $E_{RT}=\<R,T\>$ and $E_{QT}=\<Q,T\>$, where $\<\cdot,\cdot\>$ is the two-time correlator given by \eqref{2TXCRX57}. By Eq.~\eqref{EQX1}, we then have
\be \label{BEX57}
E_{QS}+E_{RS}+E_{RT}-E_{QT}=\Tr[\rho_{AB}^{T_B}\O]\, ,
\ee
where again $\O=-2(\sigma_1\otimes \sigma_1+\sigma_3\otimes \sigma_3)/\sqrt{2}$. Now since $\sigma_i^T=(-1)^{i+1}\sigma_i$, it follows that $\rho_{AB}^{T_B}=\rho_{AB}^{T_A}$ is obtained from $\rho_{AB}$ by simply setting $a_2=1$, which is reflected in the fact that the spins $\sigma_2$ on Alice and Bob's systems are now perfectly correlated in such a TPSM scenario. Moreover, since $\rho_{AB}^{T_B}$ is obtained from $\rho_{AB}$ by changing the value of $a_2$, and since $\Tr[\rho_{AB}\O]=-\sqrt{2}(a_1+a_3)$, it follows that 
\[
\Tr[\rho_{AB}^{T_B}\O]=\Tr[\rho_{AB}\O]=\pm 2\sqrt{2}\, ,
\]
which together with Eq.~\eqref{BEX57} shows that the Bell inequality \eqref{BLXEQX71} is again violated. 

We note however that even though the same Bell inequality is violated by the correlations encoded by $\rho_{AB}$ and $\rho_{AB}^{T_B}$, the correlations encoded by $\rho_{AB}$ and $\rho_{AB}^{T_B}$ are \emph{not} in fact equivalent. In particular, while in the spatial case the spins $\sigma_2$ on Alice and Bob's systems are perfectly anti-correlated, in the temporal case the spins $\sigma_2$ on Alice and Bob's systems are perfectly correlated. Moreover, since $\rho_{AB}^{T_B}$ has a negative eigenvalue, it follows that the correlations encoded by $\rho_{AB}^{T_B}$ may not be realized by spacelike separated qubits. In particular, while the correlations encoded by $\rho_{AB}$ are monogamous, the correlations encoded by $\rho_{AB}^{T_B}$ are polygamous, as Bob's qubit may not only be maximally correlated with Alice's qubit in the past, but also potentially with another qubit in its future~\cite{Wu_2025}. 

It is also important to point out that although in a temporal Bell scenario the notion of realism applies as in the spatial case, the notion of locality must be replaced with a notion of locality \emph{in time}. In the case of a TPSM scenario $(\rho_A,\sigma_{\alpha},\E,\sigma_{\beta})$, we propose that locality in time would mean that the state disturbance due to Alice's measurement of $\sigma_{\alpha}$ has no causal influence on the outcome Bob's measurement, which will depend not only on Alice's measurement outcome but also the channel $\E$ governing the dynamics between measurements. In particular, if $\E(\rho_A^+)=\E(\rho_A^-)$, where $\rho_{A}^{\pm}=\Pi_{\alpha}^{\pm}\rho_A \Pi_{\alpha}^{\pm}$ and $\sigma_{\alpha}=\Pi_{\alpha}^{+}-\Pi_{\alpha}^{-}$, then Bob's measurement outcomes will be uncorrelated with Alice's measurement outcomes, thus satisfying locality in time. This is not the case however for the examples considered in this section, since in the TPSM scenarios considered in the temporal Bell scenario Alice and Bob's measurement outcomes are always correlated, thus exhibiting ``non-local" effects in time.

\section{Discussion}

In this work, we have shown that a partial transpose of a bipartite density operator has the effect of transforming spatial correlations into temporal correlations, thus swapping the roles of space and time for bipartite quantum systems. Mathematically speaking, we established the Partial Transpose Theorem, which states that the partial transpositions $\rho_{AB}^{T_B}$ and $\rho_{AB}^{T_A}$ are both \emph{pseudo}-density operators, which provide a natural extension of the density operator formalism into the time domain. We illustrated our results in the explicit case of Bell scenarios, where we showed that the partial transpose maps correlations which violate spatial Bell inequalities to correlations which violate temporal Bell inequalities. 

As the partial transpose operation plays a fundamental role in the study of bipartite entanglement, the precise physical interpretation of partial transposition as established here should shed some light onto problems where traditional approaches have run out of steam. Moreover, the dynamical viewpoint of partial transposition as presented in this work suggests that the transpose map may play a role in contexts where gravitational effects become significant for quantum systems, such as inside the event horizon of a black hole. 

\emph{Acknowledgments}. JF would like to thank Ali Akil, Seok Hyung Lie, Xiangjing Liu, Arthur J. Parzygnat and Minjeong Song for useful discussions.

\bibliography{references}

\begin{thebibliography}{25}%
\makeatletter
\providecommand \@ifxundefined [1]{%
 \@ifx{#1\undefined}
}%
\providecommand \@ifnum [1]{%
 \ifnum #1\expandafter \@firstoftwo
 \else \expandafter \@secondoftwo
 \fi
}%
\providecommand \@ifx [1]{%
 \ifx #1\expandafter \@firstoftwo
 \else \expandafter \@secondoftwo
 \fi
}%
\providecommand \natexlab [1]{#1}%
\providecommand \enquote  [1]{``#1''}%
\providecommand \bibnamefont  [1]{#1}%
\providecommand \bibfnamefont [1]{#1}%
\providecommand \citenamefont [1]{#1}%
\providecommand \href@noop [0]{\@secondoftwo}%
\providecommand \href [0]{\begingroup \@sanitize@url \@href}%
\providecommand \@href[1]{\@@startlink{#1}\@@href}%
\providecommand \@@href[1]{\endgroup#1\@@endlink}%
\providecommand \@sanitize@url [0]{\catcode `\\12\catcode `\$12\catcode
  `\&12\catcode `\#12\catcode `\^12\catcode `\_12\catcode `\%12\relax}%
\providecommand \@@startlink[1]{}%
\providecommand \@@endlink[0]{}%
\providecommand \url  [0]{\begingroup\@sanitize@url \@url }%
\providecommand \@url [1]{\endgroup\@href {#1}{\urlprefix }}%
\providecommand \urlprefix  [0]{URL }%
\providecommand \Eprint [0]{\href }%
\providecommand \doibase [0]{http://dx.doi.org/}%
\providecommand \selectlanguage [0]{\@gobble}%
\providecommand \bibinfo  [0]{\@secondoftwo}%
\providecommand \bibfield  [0]{\@secondoftwo}%
\providecommand \translation [1]{[#1]}%
\providecommand \BibitemOpen [0]{}%
\providecommand \bibitemStop [0]{}%
\providecommand \bibitemNoStop [0]{.\EOS\space}%
\providecommand \EOS [0]{\spacefactor3000\relax}%
\providecommand \BibitemShut  [1]{\csname bibitem#1\endcsname}%
\let\auto@bib@innerbib\@empty
\bibitem [{\citenamefont {Peres}(1996)}]{Peres_96}%
  \BibitemOpen
  \bibfield  {author} {\bibinfo {author} {\bibfnamefont {Asher}\ \bibnamefont
  {Peres}},\ }\bibfield  {title} {\enquote {\bibinfo {title} {Separability
  criterion for density matrices},}\ }\href {\doibase
  10.1103/PhysRevLett.77.1413} {\bibfield  {journal} {\bibinfo  {journal}
  {Phys. Rev. Lett.}\ }\textbf {\bibinfo {volume} {77}},\ \bibinfo {pages}
  {1413--1415} (\bibinfo {year} {1996})}\BibitemShut {NoStop}%
\bibitem [{\citenamefont {Horodecki}\ \emph {et~al.}(1996)\citenamefont
  {Horodecki}, \citenamefont {Horodecki},\ and\ \citenamefont
  {Horodecki}}]{Horodecki_96}%
  \BibitemOpen
  \bibfield  {author} {\bibinfo {author} {\bibfnamefont {Michał}\ \bibnamefont
  {Horodecki}}, \bibinfo {author} {\bibfnamefont {Paweł}\ \bibnamefont
  {Horodecki}}, \ and\ \bibinfo {author} {\bibfnamefont {Ryszard}\ \bibnamefont
  {Horodecki}},\ }\bibfield  {title} {\enquote {\bibinfo {title} {Separability
  of mixed states: necessary and sufficient conditions},}\ }\href {\doibase
  https://doi.org/10.1016/S0375-9601(96)00706-2} {\bibfield  {journal}
  {\bibinfo  {journal} {Physics Letters A}\ }\textbf {\bibinfo {volume}
  {223}},\ \bibinfo {pages} {1--8} (\bibinfo {year} {1996})}\BibitemShut
  {NoStop}%
\bibitem [{\citenamefont {Werner}\ and\ \citenamefont
  {Wolf}(2001)}]{Werner_2001}%
  \BibitemOpen
  \bibfield  {author} {\bibinfo {author} {\bibfnamefont {R.~F.}\ \bibnamefont
  {Werner}}\ and\ \bibinfo {author} {\bibfnamefont {M.~M.}\ \bibnamefont
  {Wolf}},\ }\bibfield  {title} {\enquote {\bibinfo {title} {Bound entangled
  gaussian states},}\ }\href {\doibase 10.1103/PhysRevLett.86.3658} {\bibfield
  {journal} {\bibinfo  {journal} {Phys. Rev. Lett.}\ }\textbf {\bibinfo
  {volume} {86}},\ \bibinfo {pages} {3658--3661} (\bibinfo {year}
  {2001})}\BibitemShut {NoStop}%
\bibitem [{\citenamefont {Hillery}\ and\ \citenamefont
  {Zubairy}(2006)}]{Hillery_2006}%
  \BibitemOpen
  \bibfield  {author} {\bibinfo {author} {\bibfnamefont {Mark}\ \bibnamefont
  {Hillery}}\ and\ \bibinfo {author} {\bibfnamefont {M.~Suhail}\ \bibnamefont
  {Zubairy}},\ }\bibfield  {title} {\enquote {\bibinfo {title} {Entanglement
  conditions for two-mode states},}\ }\href {\doibase
  10.1103/PhysRevLett.96.050503} {\bibfield  {journal} {\bibinfo  {journal}
  {Phys. Rev. Lett.}\ }\textbf {\bibinfo {volume} {96}},\ \bibinfo {pages}
  {050503} (\bibinfo {year} {2006})}\BibitemShut {NoStop}%
\bibitem [{\citenamefont {T\'oth}\ and\ \citenamefont
  {G\"uhne}(2009)}]{Toth_2009}%
  \BibitemOpen
  \bibfield  {author} {\bibinfo {author} {\bibfnamefont {G\'eza}\ \bibnamefont
  {T\'oth}}\ and\ \bibinfo {author} {\bibfnamefont {Otfried}\ \bibnamefont
  {G\"uhne}},\ }\bibfield  {title} {\enquote {\bibinfo {title} {Entanglement
  and permutational symmetry},}\ }\href {\doibase
  10.1103/PhysRevLett.102.170503} {\bibfield  {journal} {\bibinfo  {journal}
  {Phys. Rev. Lett.}\ }\textbf {\bibinfo {volume} {102}},\ \bibinfo {pages}
  {170503} (\bibinfo {year} {2009})}\BibitemShut {NoStop}%
\bibitem [{\citenamefont {Horodecki}\ \emph {et~al.}(2022)\citenamefont
  {Horodecki}, \citenamefont {Rudnicki},\ and\ \citenamefont
  {\ifmmode~\dot{Z}\else \.{Z}\fi{}yczkowski}}]{Horodecki_2022}%
  \BibitemOpen
  \bibfield  {author} {\bibinfo {author} {\bibfnamefont {Pawe\l{}}\
  \bibnamefont {Horodecki}}, \bibinfo {author} {\bibfnamefont {\L{}ukasz}\
  \bibnamefont {Rudnicki}}, \ and\ \bibinfo {author} {\bibfnamefont {Karol}\
  \bibnamefont {\ifmmode~\dot{Z}\else \.{Z}\fi{}yczkowski}},\ }\bibfield
  {title} {\enquote {\bibinfo {title} {Five open problems in quantum
  information theory},}\ }\href {\doibase 10.1103/PRXQuantum.3.010101}
  {\bibfield  {journal} {\bibinfo  {journal} {PRX Quantum}\ }\textbf {\bibinfo
  {volume} {3}},\ \bibinfo {pages} {010101} (\bibinfo {year}
  {2022})}\BibitemShut {NoStop}%
\bibitem [{\citenamefont {Marletto}\ \emph {et~al.}(2020)\citenamefont
  {Marletto}, \citenamefont {Vedral}, \citenamefont {Virzì}, \citenamefont
  {Rebufello}, \citenamefont {Avella}, \citenamefont {Piacentini},
  \citenamefont {Gramegna}, \citenamefont {Degiovanni},\ and\ \citenamefont
  {Genovese}}]{Marletto_2020}%
  \BibitemOpen
  \bibfield  {author} {\bibinfo {author} {\bibfnamefont {Chiara}\ \bibnamefont
  {Marletto}}, \bibinfo {author} {\bibfnamefont {Vlatko}\ \bibnamefont
  {Vedral}}, \bibinfo {author} {\bibfnamefont {Salvatore}\ \bibnamefont
  {Virzì}}, \bibinfo {author} {\bibfnamefont {Enrico}\ \bibnamefont
  {Rebufello}}, \bibinfo {author} {\bibfnamefont {Alessio}\ \bibnamefont
  {Avella}}, \bibinfo {author} {\bibfnamefont {Fabrizio}\ \bibnamefont
  {Piacentini}}, \bibinfo {author} {\bibfnamefont {Marco}\ \bibnamefont
  {Gramegna}}, \bibinfo {author} {\bibfnamefont {Ivo~Pietro}\ \bibnamefont
  {Degiovanni}}, \ and\ \bibinfo {author} {\bibfnamefont {Marco}\ \bibnamefont
  {Genovese}},\ }\bibfield  {title} {\enquote {\bibinfo {title} {Non-monogamy
  of spatio-temporal correlations and the black hole information loss
  paradox},}\ }\href {\doibase 10.3390/e22020228} {\bibfield  {journal}
  {\bibinfo  {journal} {Entropy}\ }\textbf {\bibinfo {volume} {22}},\ \bibinfo
  {pages} {228} (\bibinfo {year} {2020})}\BibitemShut {NoStop}%
\bibitem [{\citenamefont {Fitzsimons}\ \emph {et~al.}(2015)\citenamefont
  {Fitzsimons}, \citenamefont {Jones},\ and\ \citenamefont {Vedral}}]{FJV15}%
  \BibitemOpen
  \bibfield  {author} {\bibinfo {author} {\bibfnamefont {Joseph~F.}\
  \bibnamefont {Fitzsimons}}, \bibinfo {author} {\bibfnamefont {Jonathan~A.}\
  \bibnamefont {Jones}}, \ and\ \bibinfo {author} {\bibfnamefont {Vlatko}\
  \bibnamefont {Vedral}},\ }\bibfield  {title} {\enquote {\bibinfo {title}
  {Quantum correlations which imply causation},}\ }\href {\doibase
  10.1038/srep18281} {\bibfield  {journal} {\bibinfo  {journal} {Sci. Rep.}\
  }\textbf {\bibinfo {volume} {5}},\ \bibinfo {pages} {18281} (\bibinfo {year}
  {2015})},\ \Eprint {http://arxiv.org/abs/1302.2731} {arXiv:1302.2731
  [quant-ph]} \BibitemShut {NoStop}%
\bibitem [{\citenamefont {Liu}\ \emph {et~al.}(2025{\natexlab{a}})\citenamefont
  {Liu}, \citenamefont {Liu}, \citenamefont {Chen}, \citenamefont {Nie},
  \citenamefont {Liu},\ and\ \citenamefont {Lu}}]{HLiu_2025}%
  \BibitemOpen
  \bibfield  {author} {\bibinfo {author} {\bibfnamefont {Hongfeng}\
  \bibnamefont {Liu}}, \bibinfo {author} {\bibfnamefont {Zhenhuan}\
  \bibnamefont {Liu}}, \bibinfo {author} {\bibfnamefont {Shu}\ \bibnamefont
  {Chen}}, \bibinfo {author} {\bibfnamefont {Xinfang}\ \bibnamefont {Nie}},
  \bibinfo {author} {\bibfnamefont {Xiangjing}\ \bibnamefont {Liu}}, \ and\
  \bibinfo {author} {\bibfnamefont {Dawei}\ \bibnamefont {Lu}},\ }\bibfield
  {title} {\enquote {\bibinfo {title} {Certifying quantum temporal correlation
  via randomized measurements: Theory and experiment},}\ }\href {\doibase
  10.1103/physrevlett.134.040201} {\bibfield  {journal} {\bibinfo  {journal}
  {Physical Review Letters}\ }\textbf {\bibinfo {volume} {134}} (\bibinfo
  {year} {2025}{\natexlab{a}}),\ 10.1103/physrevlett.134.040201}\BibitemShut
  {NoStop}%
\bibitem [{\citenamefont {{Marletto}}\ \emph {et~al.}(2021)\citenamefont
  {{Marletto}}, \citenamefont {{Vedral}}, \citenamefont {{Virz{\`\i}}},
  \citenamefont {{Avella}}, \citenamefont {{Piacentini}}, \citenamefont
  {{Gramegna}}, \citenamefont {{Degiovanni}},\ and\ \citenamefont
  {{Genovese}}}]{MVVAPGDG21}%
  \BibitemOpen
  \bibfield  {author} {\bibinfo {author} {\bibfnamefont {Chiara}\ \bibnamefont
  {{Marletto}}}, \bibinfo {author} {\bibfnamefont {Vlatko}\ \bibnamefont
  {{Vedral}}}, \bibinfo {author} {\bibfnamefont {Salvatore}\ \bibnamefont
  {{Virz{\`\i}}}}, \bibinfo {author} {\bibfnamefont {Alessio}\ \bibnamefont
  {{Avella}}}, \bibinfo {author} {\bibfnamefont {Fabrizio}\ \bibnamefont
  {{Piacentini}}}, \bibinfo {author} {\bibfnamefont {Marco}\ \bibnamefont
  {{Gramegna}}}, \bibinfo {author} {\bibfnamefont {Ivo~Pietro}\ \bibnamefont
  {{Degiovanni}}}, \ and\ \bibinfo {author} {\bibfnamefont {Marco}\
  \bibnamefont {{Genovese}}},\ }\bibfield  {title} {\enquote {\bibinfo {title}
  {Temporal teleportation with pseudo-density operators: {H}ow dynamics emerges
  from temporal entanglement},}\ }\href {\doibase 10.1126/sciadv.abe4742}
  {\bibfield  {journal} {\bibinfo  {journal} {Sci. Adv.}\ }\textbf {\bibinfo
  {volume} {7}},\ \bibinfo {pages} {eabe4742} (\bibinfo {year} {2021})},\
  \Eprint {http://arxiv.org/abs/2103.12636} {arXiv:2103.12636 [quant-ph]}
  \BibitemShut {NoStop}%
\bibitem [{\citenamefont {Marletto}\ \emph {et~al.}(2019)\citenamefont
  {Marletto}, \citenamefont {Vedral}, \citenamefont {Virzì}, \citenamefont
  {Rebufello}, \citenamefont {Avella}, \citenamefont {Piacentini},
  \citenamefont {Gramegna}, \citenamefont {Degiovanni},\ and\ \citenamefont
  {Genovese}}]{Marletto_2019}%
  \BibitemOpen
  \bibfield  {author} {\bibinfo {author} {\bibfnamefont {Chiara}\ \bibnamefont
  {Marletto}}, \bibinfo {author} {\bibfnamefont {Vlatko}\ \bibnamefont
  {Vedral}}, \bibinfo {author} {\bibfnamefont {Salvatore}\ \bibnamefont
  {Virzì}}, \bibinfo {author} {\bibfnamefont {Enrico}\ \bibnamefont
  {Rebufello}}, \bibinfo {author} {\bibfnamefont {Alessio}\ \bibnamefont
  {Avella}}, \bibinfo {author} {\bibfnamefont {Fabrizio}\ \bibnamefont
  {Piacentini}}, \bibinfo {author} {\bibfnamefont {Marco}\ \bibnamefont
  {Gramegna}}, \bibinfo {author} {\bibfnamefont {Ivo~Pietro}\ \bibnamefont
  {Degiovanni}}, \ and\ \bibinfo {author} {\bibfnamefont {Marco}\ \bibnamefont
  {Genovese}},\ }\bibfield  {title} {\enquote {\bibinfo {title} {Theoretical
  description and experimental simulation of quantum entanglement near open
  time-like curves via pseudo-density operators},}\ }\href {\doibase
  10.1038/s41467-018-08100-1} {\bibfield  {journal} {\bibinfo  {journal}
  {Nature Communications}\ }\textbf {\bibinfo {volume} {10}} (\bibinfo {year}
  {2019}),\ 10.1038/s41467-018-08100-1}\BibitemShut {NoStop}%
\bibitem [{\citenamefont {Pisarczyk}\ \emph {et~al.}(2019)\citenamefont
  {Pisarczyk}, \citenamefont {Zhao}, \citenamefont {Ouyang}, \citenamefont
  {Vedral},\ and\ \citenamefont {Fitzsimons}}]{Pisar19}%
  \BibitemOpen
  \bibfield  {author} {\bibinfo {author} {\bibfnamefont {Robert}\ \bibnamefont
  {Pisarczyk}}, \bibinfo {author} {\bibfnamefont {Zhikuan}\ \bibnamefont
  {Zhao}}, \bibinfo {author} {\bibfnamefont {Yingkai}\ \bibnamefont {Ouyang}},
  \bibinfo {author} {\bibfnamefont {Vlatko}\ \bibnamefont {Vedral}}, \ and\
  \bibinfo {author} {\bibfnamefont {Joseph~F.}\ \bibnamefont {Fitzsimons}},\
  }\bibfield  {title} {\enquote {\bibinfo {title} {Causal limit on quantum
  communication},}\ }\href {\doibase 10.1103/physrevlett.123.150502} {\bibfield
   {journal} {\bibinfo  {journal} {Physical Review Letters}\ }\textbf {\bibinfo
  {volume} {123}} (\bibinfo {year} {2019}),\
  10.1103/physrevlett.123.150502}\BibitemShut {NoStop}%
\bibitem [{\citenamefont {Zhao}\ \emph {et~al.}(2018)\citenamefont {Zhao},
  \citenamefont {Pisarczyk}, \citenamefont {Thompson}, \citenamefont {Gu},
  \citenamefont {Vedral},\ and\ \citenamefont {Fitzsimons}}]{ZPTGVF18}%
  \BibitemOpen
  \bibfield  {author} {\bibinfo {author} {\bibfnamefont {Zhikuan}\ \bibnamefont
  {Zhao}}, \bibinfo {author} {\bibfnamefont {Robert}\ \bibnamefont
  {Pisarczyk}}, \bibinfo {author} {\bibfnamefont {Jayne}\ \bibnamefont
  {Thompson}}, \bibinfo {author} {\bibfnamefont {Mile}\ \bibnamefont {Gu}},
  \bibinfo {author} {\bibfnamefont {Vlatko}\ \bibnamefont {Vedral}}, \ and\
  \bibinfo {author} {\bibfnamefont {Joseph~F.}\ \bibnamefont {Fitzsimons}},\
  }\bibfield  {title} {\enquote {\bibinfo {title} {Geometry of quantum
  correlations in space-time},}\ }\href {\doibase 10.1103/PhysRevA.98.052312}
  {\bibfield  {journal} {\bibinfo  {journal} {Phys. Rev. A}\ }\textbf {\bibinfo
  {volume} {98}},\ \bibinfo {pages} {052312} (\bibinfo {year} {2018})},\
  \Eprint {http://arxiv.org/abs/1711.05955} {arXiv:1711.05955 [quant-ph]}
  \BibitemShut {NoStop}%
\bibitem [{\citenamefont {Song}\ \emph {et~al.}(2024)\citenamefont {Song},
  \citenamefont {Narasimhachar}, \citenamefont {Regula}, \citenamefont
  {Elliott},\ and\ \citenamefont {Gu}}]{song23}%
  \BibitemOpen
  \bibfield  {author} {\bibinfo {author} {\bibfnamefont {Minjeong}\
  \bibnamefont {Song}}, \bibinfo {author} {\bibfnamefont {Varun}\ \bibnamefont
  {Narasimhachar}}, \bibinfo {author} {\bibfnamefont {Bartosz}\ \bibnamefont
  {Regula}}, \bibinfo {author} {\bibfnamefont {Thomas~J.}\ \bibnamefont
  {Elliott}}, \ and\ \bibinfo {author} {\bibfnamefont {Mile}\ \bibnamefont
  {Gu}},\ }\bibfield  {title} {\enquote {\bibinfo {title} {{Causal
  Classification of Spatiotemporal Quantum Correlations}},}\ }\href {\doibase
  10.1103/PhysRevLett.133.110202} {\bibfield  {journal} {\bibinfo  {journal}
  {Phys. Rev. Lett.}\ }\textbf {\bibinfo {volume} {133}},\ \bibinfo {pages}
  {110202} (\bibinfo {year} {2024})},\ \Eprint
  {http://arxiv.org/abs/2306.09336} {arXiv:2306.09336 [quant-ph]} \BibitemShut
  {NoStop}%
\bibitem [{\citenamefont {Fullwood}\ and\ \citenamefont
  {Parzygnat}(2024)}]{FuPa24}%
  \BibitemOpen
  \bibfield  {author} {\bibinfo {author} {\bibfnamefont {James}\ \bibnamefont
  {Fullwood}}\ and\ \bibinfo {author} {\bibfnamefont {Arthur~J.}\ \bibnamefont
  {Parzygnat}},\ }\href@noop {} {\enquote {\bibinfo {title} {Operator
  representation of spatiotemporal quantum correlations},}\ } (\bibinfo {year}
  {2024}),\ \Eprint {http://arxiv.org/abs/2405.17555} {arXiv:2405.17555
  [quant-ph]} \BibitemShut {NoStop}%
\bibitem [{\citenamefont {Liu}\ \emph {et~al.}(2024)\citenamefont {Liu},
  \citenamefont {Chen},\ and\ \citenamefont {Dahlsten}}]{Liu_2024}%
  \BibitemOpen
  \bibfield  {author} {\bibinfo {author} {\bibfnamefont {Xiangjing}\
  \bibnamefont {Liu}}, \bibinfo {author} {\bibfnamefont {Qian}\ \bibnamefont
  {Chen}}, \ and\ \bibinfo {author} {\bibfnamefont {Oscar}\ \bibnamefont
  {Dahlsten}},\ }\bibfield  {title} {\enquote {\bibinfo {title} {Inferring the
  arrow of time in quantum spatiotemporal correlations},}\ }\href {\doibase
  10.1103/physreva.109.032219} {\bibfield  {journal} {\bibinfo  {journal}
  {Physical Review A}\ }\textbf {\bibinfo {volume} {109}} (\bibinfo {year}
  {2024}),\ 10.1103/physreva.109.032219}\BibitemShut {NoStop}%
\bibitem [{\citenamefont {Liu}\ \emph {et~al.}(2025{\natexlab{b}})\citenamefont
  {Liu}, \citenamefont {Qiu}, \citenamefont {Dahlsten},\ and\ \citenamefont
  {Vedral}}]{Liu_2025}%
  \BibitemOpen
  \bibfield  {author} {\bibinfo {author} {\bibfnamefont {Xiangjing}\
  \bibnamefont {Liu}}, \bibinfo {author} {\bibfnamefont {Yixian}\ \bibnamefont
  {Qiu}}, \bibinfo {author} {\bibfnamefont {Oscar}\ \bibnamefont {Dahlsten}}, \
  and\ \bibinfo {author} {\bibfnamefont {Vlatko}\ \bibnamefont {Vedral}},\
  }\bibfield  {title} {\enquote {\bibinfo {title} {Quantum causal inference
  with extremely light touch},}\ }\href {\doibase 10.1038/s41534-024-00956-0}
  {\bibfield  {journal} {\bibinfo  {journal} {npj Quantum Information}\
  }\textbf {\bibinfo {volume} {11}} (\bibinfo {year} {2025}{\natexlab{b}}),\
  10.1038/s41534-024-00956-0}\BibitemShut {NoStop}%
\bibitem [{\citenamefont {{Brukner}}\ \emph {et~al.}(2004)\citenamefont
  {{Brukner}}, \citenamefont {{Taylor}}, \citenamefont {{Cheung}},\ and\
  \citenamefont {{Vedral}}}]{BTCV04}%
  \BibitemOpen
  \bibfield  {author} {\bibinfo {author} {\bibfnamefont {Caslav}\ \bibnamefont
  {{Brukner}}}, \bibinfo {author} {\bibfnamefont {Samuel}\ \bibnamefont
  {{Taylor}}}, \bibinfo {author} {\bibfnamefont {Sancho}\ \bibnamefont
  {{Cheung}}}, \ and\ \bibinfo {author} {\bibfnamefont {Vlatko}\ \bibnamefont
  {{Vedral}}},\ }\href {\doibase 10.48550/arXiv.quant-ph/0402127} {\enquote
  {\bibinfo {title} {Quantum entanglement in time},}\ } (\bibinfo {year}
  {2004}),\ \bibinfo {note} {arXiv preprint:
  \href{https://arxiv.org/abs/0402127}{0402127 [quant-ph]}},\ \Eprint
  {http://arxiv.org/abs/quant-ph/0402127} {arXiv:quant-ph/0402127 [quant-ph]}
  \BibitemShut {NoStop}%
\bibitem [{\citenamefont {Fritz}(2010)}]{Fritz10}%
  \BibitemOpen
  \bibfield  {author} {\bibinfo {author} {\bibfnamefont {Tobias}\ \bibnamefont
  {Fritz}},\ }\bibfield  {title} {\enquote {\bibinfo {title} {Quantum
  correlations in the temporal {C}lauser-{H}orne-{S}himony-{S}olt ({CHSH})
  scenario},}\ }\href {\doibase 10.1088/1367-2630/12/8/083055} {\bibfield
  {journal} {\bibinfo  {journal} {New J. Phys.}\ }\textbf {\bibinfo {volume}
  {12}},\ \bibinfo {pages} {083055} (\bibinfo {year} {2010})},\ \Eprint
  {http://arxiv.org/abs/1005.3421} {arXiv:1005.3421 [quant-ph]} \BibitemShut
  {NoStop}%
\bibitem [{\citenamefont {Choi}(1975)}]{Ch75}%
  \BibitemOpen
  \bibfield  {author} {\bibinfo {author} {\bibfnamefont {Man~Duen}\
  \bibnamefont {Choi}},\ }\bibfield  {title} {\enquote {\bibinfo {title}
  {Completely positive linear maps on complex matrices},}\ }\href {\doibase
  10.1016/0024-3795(75)90075-0} {\bibfield  {journal} {\bibinfo  {journal}
  {Linear Algebra Its Appl.}\ }\textbf {\bibinfo {volume} {10}},\ \bibinfo
  {pages} {285--290} (\bibinfo {year} {1975})}\BibitemShut {NoStop}%
\bibitem [{\citenamefont {Song}\ and\ \citenamefont
  {Parzygnat}(2025)}]{song25}%
  \BibitemOpen
  \bibfield  {author} {\bibinfo {author} {\bibfnamefont {Minjeong}\
  \bibnamefont {Song}}\ and\ \bibinfo {author} {\bibfnamefont {Arthur~J.}\
  \bibnamefont {Parzygnat}},\ }\href {https://arxiv.org/abs/2507.14278}
  {\enquote {\bibinfo {title} {Bipartite quantum states admitting a causal
  explanation},}\ } (\bibinfo {year} {2025}),\ \Eprint
  {http://arxiv.org/abs/2507.14278} {arXiv:2507.14278 [quant-ph]} \BibitemShut
  {NoStop}%
\bibitem [{\citenamefont {Einstein}\ \emph {et~al.}(1935)\citenamefont
  {Einstein}, \citenamefont {Podolsky},\ and\ \citenamefont {Rosen}}]{EPR}%
  \BibitemOpen
  \bibfield  {author} {\bibinfo {author} {\bibfnamefont {Albert}\ \bibnamefont
  {Einstein}}, \bibinfo {author} {\bibfnamefont {Boris}\ \bibnamefont
  {Podolsky}}, \ and\ \bibinfo {author} {\bibfnamefont {Nathan}\ \bibnamefont
  {Rosen}},\ }\bibfield  {title} {\enquote {\bibinfo {title} {Can
  quantum-mechanical description of physical reality be considered complete?}}\
  }\href {\doibase 10.1103/PhysRev.47.777} {\bibfield  {journal} {\bibinfo
  {journal} {Phys. Rev.}\ }\textbf {\bibinfo {volume} {47}},\ \bibinfo {pages}
  {777--780} (\bibinfo {year} {1935})}\BibitemShut {NoStop}%
\bibitem [{\citenamefont {Bell}(1964)}]{Bell64}%
  \BibitemOpen
  \bibfield  {author} {\bibinfo {author} {\bibfnamefont {J.~S.}\ \bibnamefont
  {Bell}},\ }\bibfield  {title} {\enquote {\bibinfo {title} {{On the
  Einstein-Podolsky-Rosen paradox}},}\ }\href {\doibase
  10.1103/PhysicsPhysiqueFizika.1.195} {\bibfield  {journal} {\bibinfo
  {journal} {Physics Physique Fizika}\ }\textbf {\bibinfo {volume} {1}},\
  \bibinfo {pages} {195--200} (\bibinfo {year} {1964})}\BibitemShut {NoStop}%
\bibitem [{\citenamefont {Fullwood}(2025)}]{Fullwood_2025}%
  \BibitemOpen
  \bibfield  {author} {\bibinfo {author} {\bibfnamefont {James}\ \bibnamefont
  {Fullwood}},\ }\bibfield  {title} {\enquote {\bibinfo {title} {{A
  Diagrammatic Formulation of Local Realism}},}\ }\href {\doibase
  10.1007/s10701-025-00851-4} {\bibfield  {journal} {\bibinfo  {journal}
  {Found. Phys.}\ }\textbf {\bibinfo {volume} {55}},\ \bibinfo {pages} {40}
  (\bibinfo {year} {2025})},\ \Eprint {http://arxiv.org/abs/2502.19606}
  {arXiv:2502.19606 [quant-ph]} \BibitemShut {NoStop}%
\bibitem [{\citenamefont {Wu}\ \emph {et~al.}(2025)\citenamefont {Wu},
  \citenamefont {Parzygnat}, \citenamefont {Vedral},\ and\ \citenamefont
  {Fullwood}}]{Wu_2025}%
  \BibitemOpen
  \bibfield  {author} {\bibinfo {author} {\bibfnamefont {Zhen}\ \bibnamefont
  {Wu}}, \bibinfo {author} {\bibfnamefont {Arthur~J}\ \bibnamefont
  {Parzygnat}}, \bibinfo {author} {\bibfnamefont {Vlatko}\ \bibnamefont
  {Vedral}}, \ and\ \bibinfo {author} {\bibfnamefont {James}\ \bibnamefont
  {Fullwood}},\ }\bibfield  {title} {\enquote {\bibinfo {title} {Quantum mutual
  information in time},}\ }\href {\doibase 10.1088/1367-2630/adde7d} {\bibfield
   {journal} {\bibinfo  {journal} {New Journal of Physics}\ }\textbf {\bibinfo
  {volume} {27}},\ \bibinfo {pages} {064504} (\bibinfo {year}
  {2025})}\BibitemShut {NoStop}%
\end{thebibliography}%


\clearpage
\newpage

\title{Methods}
\author{testing}

\maketitle
\onecolumngrid
\vspace{1cm}

\begin{center}\large \textbf{Partial transpose as a spacetime swap} \\
\textbf{--- Supplemental Material ---}\\
\end{center}

\appendix

In this Supplemental Material, we prove the Partial Transpose Theorem, which appears as Theorem~\ref{MTX1} in the main text. As in the main text, $\A$ denotes the algebra of linear operators on a Hilbert space $\H_A$ associated with some quantum system $A$, and similarly for $\B$. We assume that orthonormal bases have been chosen for $\H_A$ and $\H_B$ so that we may identify $\A$ and $\B$ with matrix algebras $\M_d$ and $\M_n$, respectively. Density matrices will be referred to as \define{states}. The transpose map on a single system will be denoted by $T$, while partial transpose maps over system $A$ and $B$ will then be denoted by $T_A$ and $T_B$, respectively. The matrix units of $\A$ will be denoted by $E_{ij}$. Given a unitary operator $U$, the map $\rho \mapsto U\rho U^{\dag}$ will be denoted by $\Ad_U$. Given a linear map $\E:\A\to \B$, we recall that \define{Jamio\l kowski matrix} of $\E$ is the element $\J[\E]\in \A\otimes \B$ given by
\[
\J[\E]=\sum_{i,j}E_{ij}\otimes \E(E_{ji})\, ,
\]
and moreover, the map $\E\mapsto \J[\E]$ is a linear isomorphism. On the other hand, the \define{Choi matrix} of $\E$ is the operator $\mathscr{C}[\E]=\J[\E]^{T_A}$, so that
\[
\mathscr{C}[\E]=\sum_{i,j}E_{ij}\otimes \E(E_{ij})\, .
\]

\section{Some preliminary results from linear algebra}
In this appendix, we prove some elementary results from linear algebra to help facilitate the proof of the Partial Transpose Theorem.

\blem\label{lemma:block_matrix_zero_C}
If \( X = \begin{bmatrix} Y & Z \\ Z^\dagger & 0 \end{bmatrix} \geq 0 \) is a block matrix, then \( Z = 0 \).
\elem

\bprf
Assume \( Z \neq 0 \). Let \( u = (v, -tZv) \) where \( t \in \R \) and \( v \) is a suitable vector. We then have 
\[
\begin{aligned}
u^\dagger X u 
&= \begin{bmatrix} v^\dagger & -t (Zv)^\dagger \end{bmatrix} 
   \begin{bmatrix} Y & Z \\ Z^\dagger & 0 \end{bmatrix} 
   \begin{bmatrix} v \\ -tZv \end{bmatrix} = v^\dagger Y v - t v^\dagger Z Z^\dagger v - t (Zv)^\dagger Z^\dagger v + t^2 (Zv)^\dagger 0 (Zv) = v^\dagger Y v - 2t v^\dagger Z Z^\dagger v\ .
\end{aligned}
\]
Since \( X \geq 0 \), we analyze two cases:

\medskip

\noindent \textbf{Case 1.} If \( v^\dagger Y v = 0 \):  
Then \( u^\dagger X u = -2t v^\dagger Z Z^\dagger v \).  
Since \( Z \neq 0 \), there exists \( v \) such that \( v^\dagger Z Z^\dagger v > 0 \). Choosing \( t > 0 \)  makes \( u^\dagger X u < 0 \), contradicting \( X \geq 0 \).

\medskip

\noindent \textbf{Case 2.} If \( v^\dagger Y v > 0 \):  
Set \( t = \frac{v^\dagger Y v}{v^\dagger Z Z^\dagger v} \). Then  
\[
u^\dagger X u = v^\dagger Y v - 2 \cdot \frac{v^\dagger Y v}{v^\dagger Z Z^\dagger v} \cdot v^\dagger Z Z^\dagger v = -v^\dagger Y v < 0\ .
\]  
This also contradicts \( X \geq 0 \).
\\
Therefore, our assumption \( Z \neq 0 \) is false, i.e. \( Z = 0 \).
\eprf
\blem\label{lemma:block_matrix_positivity}
Let \( X = \sum_{i,j=1}^d E_{ij} \otimes X_{ij} \in \A \otimes \B \) be a block matrix. Then \( X \geq 0 \) if and only if  \( \sum_{i,j=1}^r E_{ij} \otimes X_{ij} \geq 0 \) for all \( r \in \{1,2,...d\} \).
\elem

\bprf

\medskip

\noindent \(\boldsymbol{(\Longrightarrow)}\)
Let \( X^{(r)} = \sum_{i,j=1}^r E_{ij} \otimes X_{ij} \) and define the projection matrix \( P_r = \begin{bmatrix} \mathds{1}_{rd} & \mathbf{0} \end{bmatrix} \in \M_{rd \times sn} \). Then
\[
X^{(r)} = P_r X P_r^\dagger
\]
For all non-zero vector \( x \in \C^{rd} \), we have
\[
\begin{aligned}
x^\dagger X^{(r)} x 
&= x^\dagger (P_r X P_r^\dagger) x = (P_r^\dagger x)^\dagger X (P_r^\dagger x) 
\end{aligned}
\]
Since \( X \geq 0 \), the right-hand side is non-negative. Thus, \( X^{(r)} \geq 0 \).

\medskip

\noindent \(\boldsymbol{(\Longleftarrow)}\) 
Let \( r = d \), then
\[
X^{(d)} = \sum_{i,j=1}^d E_{ij} \otimes X_{ij} = X
\]
By the condition for \( r = d \), \( X^{(d)} \geq 0 \), so \( X \geq 0 \).
\eprf
\blem\label{lemma:block_matrix_zero_blocks}
Let \( X = \sum_{i,j=1}^d E_{ij} \otimes X_{ij} \in \A \otimes \B \) be a block matrix with \( X \geq 0 \). If \( X_{ii} = 0 \), then \( X_{ik} = X_{ki} = 0 \) for all \( k \in \{1,2,...i\} \).
\elem

\bprf
 Assume $X_{ii} = 0$. By Lemma~\ref{lemma:block_matrix_positivity},  \( X^{(h)}=\sum_{r,s=1}^h E_{rs} \otimes X_{rs} \geq 0 \) for all \( h \in \{1,2,...d\} \) .Now,consider $X^{(h)}$ when \( h = i \), we have
\[
X^{(i)}=\sum_{r,s=1}^i E_{rs} \otimes X_{rs} 
= \begin{bmatrix} 
\sum_{r,s=1}^{i-1} E_{rs} \otimes X_{rs} & Y &...\\
Y^\dagger & E_{ii}\otimes X_{ii}&...\\
...&...&...
\end{bmatrix}=\begin{bmatrix} 
\sum_{r,s=1}^{i-1} E_{rs} \otimes X_{rs} & Y &...\\
Y^\dagger & 0&...\\
...&...&...
\end{bmatrix}\geq 0\, ,
\]
where (...) are all blocks of zero matrix and \( Y = \sum_{k=1}^{i-1}e_{k}\otimes X_{ki}\in\H_A\otimes\B\) .By Lemma \ref{lemma:block_matrix_zero_C}, a block matrix of the form \( \begin{bmatrix} * & Y \\ Y^\dagger & 0 \end{bmatrix} \geq 0 \) implies \( Y = 0 \).\\
Thus
\[
Y = Y^\dagger = 0 \implies X_{ki} = \overline{X_{ki}} = X_{ik} = 0\quad\text{for all}\ k\in\{1,2,...i\}\, ,
\]
as desired.
\eprf
\begin{lemma}\label{a}
For all $\rho\in\A$ and $X\in \A\otimes\B$, $T_A\big((\rho\otimes\mathds{1})X\big)=T_A\big(X\big)T_A\big((\rho\otimes\mathds{1})\big)$.

\end{lemma}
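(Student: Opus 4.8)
The plan is to exploit the fact that the partial transpose $T_A$ acts as the ordinary transpose on the $\A$-factor while leaving the $\B$-factor untouched, so that the anti-multiplicativity of the ordinary transpose can be transferred in a controlled way. Since both sides of the claimed identity are linear in $X$, and every element of $\A \otimes \B$ is a linear combination of simple tensors $a \otimes b$ with $a \in \A$ and $b \in \B$, I would first reduce to the case $X = a \otimes b$.

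For such an $X$, I would compute the left-hand side directly: $(\rho \otimes \mathds{1})(a \otimes b) = \rho a \otimes b$, and then apply $T_A$, which acts as the transpose on the first tensor factor. Using the anti-multiplicativity $(\rho a)^T = a^T \rho^T$, this gives $T_A\big((\rho \otimes \mathds{1})(a \otimes b)\big) = (\rho a)^T \otimes b = a^T \rho^T \otimes b$. For the right-hand side, $T_A(a \otimes b) = a^T \otimes b$ and $T_A(\rho \otimes \mathds{1}) = \rho^T \otimes \mathds{1}$, and multiplying these two elementary tensors yields $(a^T \otimes b)(\rho^T \otimes \mathds{1}) = a^T \rho^T \otimes b$, which matches the left-hand side. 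Extending by linearity in $X$ then completes the argument.

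The computation itself is routine; the one conceptual point worth flagging is why the identity holds for the specific operator $\rho \otimes \mathds{1}$ rather than for an arbitrary second factor. In general $T_A$ is \emph{not} an anti-homomorphism on $\A \otimes \B$, because reversing the index order on the $\A$-factor leaves the $\B$-factor in its original order, and the two orderings on $\B$ need not agree. What rescues the statement here is that the $\B$-component of $\rho \otimes \mathds{1}$ is the identity---equivalently, the $\B$-blocks of $\rho \otimes \mathds{1}$ are central---so the order on $\B$ is immaterial and the order reversal on $\A$ induced by the transpose is the only effect.

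Accordingly, I expect no genuine obstacle beyond bookkeeping: the only subtlety is to keep clear that $T_A$ reverses order on $\A$ alone, and to invoke the identity $\mathds{1}^T = \mathds{1}$ so that $T_A(\rho \otimes \mathds{1})$ is precisely $\rho^T \otimes \mathds{1}$. An equivalent route, should one prefer to avoid the reduction to simple tensors, is to write $X = \sum_{k,l} E_{kl} \otimes X_{kl}$ in block form and track the $(i,l)$-block of each side, where one finds both sides equal to $\sum_j \rho_{lj} X_{ji}$; the centrality of the scalar blocks $\rho_{lj}\mathds{1}$ is again exactly what makes the two $\B$-orderings coincide.
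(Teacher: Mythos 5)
Your proof is correct and takes essentially the same route as the paper's: the paper decomposes $X$ into blocks $\sum_{i,j} E_{ij}\otimes X_{ij}$ and applies the anti-multiplicativity of the transpose on the $\A$-factor before factoring out $\rho^T\otimes\mathds{1}$, which is exactly your simple-tensor computation with matrix units playing the role of $a$. Your remark that the argument hinges on the $\B$-component of $\rho\otimes\mathds{1}$ being the (central) identity is a correct and worthwhile observation, implicit but not spelled out in the paper's proof.
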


\begin{proof}
Let $X=\sum_{i,j=1}^dE_{ij}\otimes X_{ij}\in\A\otimes\B$ be a block representation of $X$. Then 
\begin{align*}
T_A\left((\rho\otimes\mathds{1})X\right)&=T_A\left((\rho\otimes\mathds{1})\left(\sum_{i,j=1}^dE_{ij}\otimes X_{ij}\right)\right)=T_A\left(\sum_{i,j=1}^d \rho E_{ij}\otimes X_{ij}\right)=\sum_{i,j=1}^d E_{ij}^T\rho^T\otimes X_{ij}\\
    &=\left(\sum_{i,j=1}^dE_{ij}^T\otimes X_{ij}\right)(\rho^T\otimes\mathds{1})=T_A\big(X\big)T_A\big((\rho\otimes\mathds{1})\big)\, ,
    \end{align*}
    as desired.
\end{proof}

\begin{lemma}\label{TTB}
    Let $X\in\A\otimes\B$, then $\Tr_B\circ T=T\circ\Tr_B$, $\Tr_B\circ T_A=T_A\circ\Tr_B$ and $\Tr_B\circ T_B=T_B\circ\Tr_B$.
\end{lemma}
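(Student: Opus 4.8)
The plan is to establish all three identities by a single computation in the block (matrix-unit) representation of $X$, reducing each of them to the elementary fact that $\Tr(M^{T})=\Tr(M)$. First I would write $X=\sum_{i,j=1}^{d}E_{ij}\otimes X_{ij}$ with $X_{ij}\in\B$, so that $\Tr_B(X)=\sum_{i,j}\Tr(X_{ij})\,E_{ij}$. I would then record how the three transpose maps act on the blocks, namely $T_A(X)=\sum_{i,j}E_{ji}\otimes X_{ij}$, $T_B(X)=\sum_{i,j}E_{ij}\otimes X_{ij}^{T}$, and, for the full transpose $T=T_A\circ T_B$, $T(X)=\sum_{i,j}E_{ji}\otimes X_{ij}^{T}$, using $E_{ij}^{T}=E_{ji}$.

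With these formulas in hand, each identity follows by comparing the two sides directly. For $\Tr_B\circ T_A=T_A\circ\Tr_B$, tracing out $B$ from $T_A(X)$ gives $\sum_{i,j}\Tr(X_{ij})\,E_{ji}$, whereas transposing $\Tr_B(X)$ over $A$ gives $\sum_{i,j}\Tr(X_{ij})\,E_{ij}^{T}$, and these coincide since $E_{ij}^{T}=E_{ji}$. For $\Tr_B\circ T_B=T_B\circ\Tr_B$ and $\Tr_B\circ T=T\circ\Tr_B$, the only additional ingredient is $\Tr(X_{ij}^{T})=\Tr(X_{ij})$, which collapses the transpose on the $\B$-blocks once the partial trace has been taken. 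Conceptually, the cleanest way to package this is to note that each map acts on a single tensor factor: $T_A=T\otimes\id_\B$ and $\Tr_B=\id_\A\otimes\Tr$ act on disjoint factors and hence commute, giving the second identity, while $\Tr_B\circ T_B=(\id_\A\otimes\Tr)\circ(\id_\A\otimes T)=\id_\A\otimes(\Tr\circ T)=\id_\A\otimes\Tr=\Tr_B$ by invariance of the trace under transposition, and the first identity reduces to the same computation on the $\B$ factor together with the transpose $T_A$ surviving on the $\A$ factor.

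I expect no substantive mathematical obstacle; the only point requiring care is bookkeeping about how the transpose maps are interpreted on the target algebra $\A$ once $\Tr_B$ has eliminated the $\B$ factor. Specifically, on $\A$ the partial transpose $T_B$ is to be read as the identity and the full transpose $T$ as the ordinary transpose $T_A$ of $\A$, so that the right-hand compositions $T_B\circ\Tr_B$ and $T\circ\Tr_B$ are well defined and agree with the left-hand ones. Once this convention is fixed, all three statements reduce to $\Tr(M^{T})=\Tr(M)$ and $E_{ij}^{T}=E_{ji}$, and the verification is immediate.
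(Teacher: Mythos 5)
Your proposal is correct and takes essentially the same route as the paper's proof: expand $X$ in the block form $\sum_{i,j}E_{ij}\otimes X_{ij}$, compute both sides of each identity, and reduce everything to $\Tr(M^{T})=\Tr(M)$ together with $E_{ij}^{T}=E_{ji}$. Your explicit remark on how $T$ and $T_B$ must be interpreted on $\A$ once $\Tr_B$ has removed the $\B$ factor is a point the paper leaves implicit (it proves only the identity for $T$ and settles the other two \emph{mutatis mutandis}), but it does not change the substance of the argument.
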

\begin{proof}
    Consider the block decomposition $X=\sum_{i.j=1}^dE_{ij}\otimes X_{ij}$, we then have 
    \begin{align*}
        \Tr_B\circ T(X)&=\Tr_B\left[\left(\sum_{i.j=1}^dE_{ij}\otimes X_{ij}\right)^T\right]=\Tr_B\left[\sum_{i.j=1}^dE_{ji}\otimes X_{ij}^T\right]=\sum_{i,j=1}^d\Tr[X_{ij}^T]E_{ji}=\sum_{i,j=1}^d\Tr[X_{ij}]E_{ji}\\&=\left(\sum_{i,j=1}^d\Tr[X_{ij}]E_{ij}\right)^T=\left(\Tr_B\left[\sum_{i,j=1}^dE_{ij}\otimes X_{ij}\right]\right)^T=T\circ\Tr_B[X]\, ,
    \end{align*}
    as desired.Similarly, $\Tr_B\circ T_A=T_A\circ\Tr_B$ and $\Tr_B\circ T_B=T_B\circ\Tr_B$ are established \emph{mutatis mutandis}.
\end{proof}


\blem\label{lemma:partial-trace-unitary}
Let $U\otimes\mathds{1}\in\A\otimes\B$. Then $\Tr_B\circ\Ad_{(U\otimes\mathds{1})}=\Ad_U\circ\Tr_B$ .
\elem

\bprf
For all $X\in\A \otimes \B$, consider the block decomposition $X=\sum_{i,j=1}^d E_{ij} \otimes X_{ij}$ we then have 
\[
\begin{aligned}
\Tr_B\circ\Ad_{(U\otimes\mathds{1})}\big(X\big)&=\Tr_B\!\left[ (U^\dagger \otimes \mathds{1}) X (U \otimes \mathds{1}) \right] = \Tr_B\!\left[ (U^\dagger \otimes \mathds{1}) \left( \sum_{i,j=1}^d E_{ij} \otimes X_{ij} \right) (U \otimes \mathds{1}) \right] \\&= \Tr_B\!\left[ \sum_{i,j=1}^d U^\dagger E_{ij} U \otimes X_{ij} \right] = \sum_{i,j=1}^d \Tr[X_{ij}] U^\dagger E_{ij} U = U^\dagger \left( \sum_{i,j=1}^d \Tr[X_{ij}] E_{ij} \right) U \\
&= U^\dagger \Tr_B\!\left[\sum_{i,j=1}^d E_{ij} \otimes X_{ij} \right] U  = U^\dagger \Tr_B[X] U=\Ad_U\circ\Tr_B\big(X\big)\, ,
\end{aligned}
\]
as desired.
\eprf
\blem\label{lemma:partial_transpose_property}
Let \(\rho_{AB} \in \A \otimes \B\) be a state, let $\rho_A=\Tr_B[\rho_{AB}]\in\A$, let $U\in\A$ be a unitary such that \(U \rho_AU^\dagger = \mathrm{diag}(\lambda_1, \lambda_2, \ldots, \lambda_d)\), and let $\rho_{ij}$ be such that \(((U^\dagger)^T \otimes \mathds{1})\rho_{AB}^T(U^T \otimes \mathds{1}) = \sum_{i,j=1}^d E_{ij} \otimes \rho_{ij}\). If $\lambda_k = 0$ then $ \rho_{kk} = 0$ for all $k\in\{1,2,....d\}$ .
\elem

\bprf
Assume \(\lambda_k = 0\). Then 
\begin{align*}
0 &= \lambda_k = \left(\mathrm{diag}(\lambda_1, \lambda_2, \ldots, \lambda_d)\right)_{kk} =(U\rho_A U^\dagger)_{kk}= \left(U \Tr_B[\rho_{AB}]U^\dagger\right)_{kk} \stackrel{*}{=} \left(\Tr_B\left[(U \otimes \mathds{1})\rho_{AB}(U^\dagger \otimes \mathds{1})\right]\right)_{kk} \\
&=\left(\Tr_B\circ T\left(((U^\dagger)^T \otimes \mathds{1})\rho_{AB}^T(U^T \otimes \mathds{1})\right)\right)_{kk}\stackrel{**}{=} \left(T\circ\Tr_B\left(((U^\dagger)^T \otimes \mathds{1})\rho_{AB}^T(U^T \otimes \mathds{1})\right)\right)_{kk}\\
&=\left(T\circ\Tr_B\left(\sum_{i,j=1}^d E_{ij} \otimes \rho_{ij}\right)\right)_{kk}= \left(T\left(\sum_{i,j=1}^d \Tr[\rho_{ij}] E_{ij}\right)\right)_{kk} = \left(\left(\sum_{i,j=1}^d \Tr[\rho_{ij}] E_{ij}\right)^T\right)_{kk}\\
&= \left(\sum_{i,j=1}^d \Tr[\rho_{ij}] E_{ji}\right)_{kk}=\Tr[\rho_{kk}]\, ,
\end{align*}
where $*$ follows from Lemma~\ref{lemma:partial-trace-unitary} and $**$ follows from Lemma~\ref{TTB}. Since $U$ is a unitary and $\rho_{AB}$ is a state, we have
\[
\rho_{AB}\geq0\implies((U^\dagger)^T \otimes \mathds{1}) \rho_{AB}^T(U^T \otimes \mathds{1})^\dagger = \sum_{i,j=1}^d E_{ij} \otimes \rho_{ij}\implies\rho_{kk}\geq0\, .
\]
Now since \(\Tr[\rho_{kk}] = 0\) and $\rho_{kk}\geq 0$, we have
\[
0=\Tr[\rho_{kk}]=\sum_{i=1}^d (\rho_{kk})_{ii} \implies (\rho_{kk})_{ii} = 0 \,,\ i \in \{1,2,...d\} \implies \rho_{kk} = 0\, ,
\]
as desired.
\eprf
\blem\label{lemma:matrix_equation_solution}
Let \( \Lambda = \mathrm{diag}(\lambda_1, \lambda_2, \ldots, \lambda_d) \in \A \) be a diagonal matrix with $\lambda_i\geq 0$ for all $i\in \{1,2,...d\}$, and let $D = \sum_{i,j=1}^d E_{ij}\otimes D_{ij}\in \A\otimes \B$ be a positive semi-definite block matrix with $D_{ii} = 0$ for all $i$ such that $ \lambda_i= 0$. 
Then the following statements hold.
\begin{enumerate}[i.]
\item \label{LZXS1}
$Z$ is a solution of the matrix equation \( (\Lambda \otimes \mathds{1})Z + Z(\Lambda\otimes \mathds{1}) = D \) if and only if $Z$ is of the form
\be \label{ZXSX57}
Z = \sum_{\{i,j|\lambda_i + \lambda_j \neq 0\}} E_{ij} \otimes \frac{D_{ij}}{\lambda_i + \lambda_j} + \sum_{\{i,j|\lambda_i + \lambda_j = 0\}} E_{ij} \otimes W_{ij}\, .
\ee
\item \label{LZXS2}
The matrix $\sum_{\{i,j|\lambda_i+\lambda_j\neq 0\}} E_{ij} \otimes \frac{D_{ij}}{\lambda_i + \lambda_j}$ is positive semi-definite.
\item \label{LZXS3}
If $\lambda_i > 0$ for all $i\in \{1,\ldots,d\}$, then the matrix equation $(\Lambda \otimes \mathds{1})Z + Z(\Lambda\otimes \mathds{1}) = D$ has a unique solution given by $Z=\sum_{i,j=1}^{d} E_{ij} \otimes \frac{D_{ij}}{\lambda_i + \lambda_j}$, which is positive semi-definite. 
\end{enumerate}
\elem

\bprf
First, if $\lambda_i+\lambda_j=0$ then $\lambda_i=\lambda_j=0$ since $\lambda_i\geq0$ for all $i\in \{1,\ldots,d\}$. And since $\lambda_i=\lambda_j=0$ implies $D_{ii}=D_{jj}=0$ by assumption, it follows from Lemma~\ref{lemma:block_matrix_zero_blocks} that $D_{ij}=0$ whenever $\lambda_i+\lambda_j=0$.\\

\underline{Item~\ref{LZXS1}}: $(\implies)$ Suppose $Z$ is a solution of the matrix equation $(\Lambda \otimes \mathds{1})Z + Z(\Lambda\otimes \mathds{1}) = D$, and let  \( Z = \sum_{i,j=1}^d E_{ij} \otimes Z_{ij} \) be a block representation of $Z$. We then have
\[
\begin{aligned}
&(\Lambda \otimes \mathds{1}) \left( \sum_{i,j=1}^d E_{ij} \otimes Z_{ij} \right) + \left( \sum_{i,j=1}^d E_{ij} \otimes Z_{ij} \right) (\Lambda \otimes \mathds{1}) = \sum_{i,j=1}^d E_{ij} \otimes D_{ij} \\
\implies &\sum_{i,j=1}^d \lambda_i E_{ij} \otimes Z_{ij} + \sum_{i,j=1}^d \lambda_j E_{ij} \otimes Z_{ij} = \sum_{i,j=1}^d E_{ij} \otimes D_{ij} \\
\implies &\sum_{i,j=1}^d (\lambda_i + \lambda_j) E_{ij} \otimes Z_{ij} = \sum_{i,j=1}^d E_{ij} \otimes D_{ij} \\
\implies &(\lambda_i + \lambda_j) Z_{ij} = D_{ij}
\end{aligned}
\]
If \( \lambda_i + \lambda_j \neq 0 \), then $Z_{ij} = \frac{D_{ij}}{\lambda_i + \lambda_j}$. And if $\lambda_i + \lambda_j = 0$, then $D_{ij}=0$,  thus there are no constraints on $Z_{ij}$. Therefore, $Z$ is necessarily of the form
\[
Z = \sum_{\{i,j|\lambda_i + \lambda_j \neq 0\}} E_{ij} \otimes \frac{D_{ij}}{\lambda_i + \lambda_j} + \sum_{\{i,j|\lambda_i + \lambda_j = 0\}} E_{ij} \otimes W_{ij}\, ,
\]
as desired.\\
$(\impliedby)$ 
Suppose $Z$ is of the form \eqref{ZXSX57}. Since $D_{ij}=0$ whenever $\lambda_i+\lambda_j=0$, it follows that $D=\sum_{i,j=1}^dE_{ij}\otimes D_{ij}=\sum_{\{i,j\,|\,\lambda_i + \lambda_j \neq 0\}} E_{ij} \otimes D_{ij}$. We then have 
\begin{align*}
(\Lambda \otimes \mathds{1})Z
&=\left(\sum_{k=1}^d E_{kk} \otimes \lambda_k \mathds{1}\right)\left(\sum_{\{i,j,|,\lambda_i + \lambda_j \neq 0\}} E_{ij} \otimes \frac{D_{ij}}{\lambda_i + \lambda_j} + \sum_{\{i,j,|,\lambda_i + \lambda_j = 0\}} E_{ij} \otimes W_{ij}\right)\\
&=\sum_{\{i,j,|,\lambda_i + \lambda_j \neq 0\}}\sum_{k=1}^d E_{kk}E_{ij} \otimes \frac{\lambda_k D_{ij}}{\lambda_i + \lambda_j} + \sum_{\{i,j,|,\lambda_i + \lambda_j = 0\}}\sum_{k=1}^d E_{kk}E_{ij} \otimes \lambda_k W_{ij} \\
&=\sum_{\{i,j,|,\lambda_i + \lambda_j \neq 0\}}\sum_{k=1}^d \delta_{ki}E_{ij} \otimes \frac{\lambda_k D_{ij}}{\lambda_i + \lambda_j} +\sum_{\{i,j,|,\lambda_i + \lambda_j = 0\}}\sum_{k=1}^d \delta_{ki}E_{ij} \otimes \lambda_k W_{ij} \\
&=\sum_{\{i,j,|,\lambda_i + \lambda_j \neq 0\}} E_{ij} \otimes \frac{\lambda_i D_{ij}}{\lambda_i + \lambda_j} + \sum_{\{i,j,|,\lambda_i + \lambda_j = 0\}} E_{ij} \otimes \lambda_i W_{ij}\, .
\end{align*}
Similarly, we have 
\[
Z(\Lambda \otimes \mathds{1})=\sum_{\{i,j\,|\,\lambda_i + \lambda_j \neq 0\}} E_{ij} \otimes \frac{\lambda_j D_{ij}}{\lambda_i + \lambda_j} + \sum_{\{i,j\,|\,\lambda_i + \lambda_j = 0\}} E_{ij} \otimes \lambda_j W_{ij}\, ,
\]
thus
\begin{align*}
    (\Lambda \otimes \mathds{1})Z+Z(\Lambda \otimes \mathds{1})&=\sum_{\{i,j\,|\,\lambda_i + \lambda_j \neq 0\}} E_{ij} \otimes \frac{(\lambda_i+\lambda_j) D_{ij}}{\lambda_i + \lambda_j} + \sum_{\{i,j\,|\,\lambda_i + \lambda_j = 0\}} E_{ij} \otimes (\lambda_i+\lambda_j) W_{ij}\\
    &=\sum_{\{i,j\,|\,\lambda_i + \lambda_j \neq 0\}} E_{ij} \otimes D_{ij}=D\, ,
\end{align*}
as desired.

\underline{Item~\ref{LZXS2}}: Let \( v = \sum_{k=1}^d e_k \otimes v_k \in \H_A \otimes \H_B \) be a non-zero block vector and let $X=\sum_{\{i,j\,|\,\lambda_i + \lambda_j \neq 0\}} E_{ij} \otimes \frac{D_{ij}}{\lambda_i + \lambda_j}$. Since $D_{ij}=0$ whenever $\lambda_i+\lambda_j=0$, $D=\sum_{i,j=1}^dE_{ij}\otimes D_{ij}=\sum_{\{i,j\,|\,\lambda_i + \lambda_j \neq 0\}} E_{ij} \otimes D_{ij}$\ . So we then have

\begin{align*}
v^\dagger X v
&= \left( \sum_{k=1}^d e_k^\dagger \otimes v_k^\dagger \right)
\left( \sum_{\{i,j,|,\lambda_i + \lambda_j \neq 0\}} E_{ij} \otimes \frac{D_{ij}}{\lambda_i + \lambda_j} \right)
\left( \sum_{l=1}^d e_l \otimes v_l \right)= \sum_{\{i,j,|,\lambda_i + \lambda_j \neq 0\}}\sum_{k,l=1}^d \left( e_k^\dagger E_{ij} e_l \right)
\left( v_k^\dagger \frac{D_{ij}}{\lambda_i + \lambda_j} v_l \right) \\
&= \sum_{\{i,j,|,\lambda_i + \lambda_j \neq 0\}} \frac{1}{\lambda_i + \lambda_j} v_i^\dagger D_{ij} v_j= \sum_{\{i,j,|,\lambda_i + \lambda_j \neq 0\}} v_i^\dagger D_{ij} v_j \int_0^\infty e^{-t(\lambda_i + \lambda_j)} dt \\
&= \int_0^\infty \sum_{\{i,j,|,\lambda_i + \lambda_j \neq 0\}} v_i^\dagger e^{-t\lambda_i} D_{ij} e^{-t\lambda_j} v_j dt = \int_0^\infty \sum_{\{i,j,|,\lambda_i + \lambda_j \neq 0\}} v_i(t)^\dagger D_{ij} v_j(t), dt \\
&= \int_0^\infty \sum_{\{i,j,|,\lambda_i + \lambda_j \neq 0\}}\sum_{k,l=1}^d \left( e_k^\dagger E_{ij} e_l \right)
\left( v_k(t)^\dagger D_{ij} v_l(t) \right) dt \\
&= \int_0^\infty \left( \sum_{k=1}^d e_k^\dagger \otimes v_k(t)^\dagger \right)
\left( \sum_{\{i,j,|,\lambda_i + \lambda_j \neq 0\}} E_{ij} \otimes D_{ij} \right)
\left( \sum_{l=1}^d e_l \otimes v_l(t) \right) dt = \int_0^\infty v(t)^\dagger D v(t) dt\geq0 \, ,
\end{align*}
where $v_i(t)=v_i e^{-t\lambda_i}$ .

\underline{Item~\ref{LZXS3}}: If $\lambda_i>0$ for all $i\in \{1,2,...d\} $ then it follows immediately from item~\ref{LZXS1} that $Z=\sum_{i,j=1}^{d} E_{ij} \otimes \frac{D_{ij}}{\lambda_i + \lambda_j}$ is the unique solution of the matrix equation $(\Lambda\otimes \mathds{1})Z+Z(\Lambda\otimes \mathds{1})=D$. The fact that $Z\geq 0$ then follows from item~\ref{LZXS2}.

\eprf

\blem\label{lemma:trace_preserving_map}
Let \( \map{E} : \A \to \B \) be a linear map. Then \( \map{E} \) is a trace-preserving map if and only if \( \Tr_B\big[\mathscr{C}[\map{E}]\big] = \mathds{1} \).
\elem

\bprf

\medskip

\noindent \(\boldsymbol{(\Longrightarrow)}\) Assume $\E$ is trace-preserving. We then have
\[
\Tr_B\big[\mathscr{C}[\map{E}]\big]= \Tr_B\left[ \sum_{i,j=1}^d E_{ij} \otimes \map{E}(E_{ij}) \right]= \sum_{i,j=1}^d \Tr\big[\map{E}(E_{ij})\big] E_{ij}= \sum_{i,j=1}^d \Tr[E_{ij}] E_{ij}= \sum_{i,j=1}^d \delta_{ij} E_{ij}= \sum_{i=1}^d E_{ii}= \mathds{1}\, ,
\]
as desired.

\noindent \(\boldsymbol{(\Longleftarrow)}\) Assume \( \Tr_B\big[\mathscr{C}[\map{E}]\big] = \mathds{1} \). We then have
\[
\sum_{i,j=1}^d \Tr\big[\map{E}(E_{ij})\big] E_{ij} = \mathds{1} \implies \Tr\big[\map{E}(E_{ij})\big] = \delta_{ij}\ .
\]
Now let \( X = \sum_{i,j=1}^d x_{ij} E_{ij} \in \A \) be arbitrary. We then have
\[
\Tr\big[\map{E}(X)\big]= \Tr\left[ \map{E}\left( \sum_{i,j=1}^d x_{ij} E_{ij} \right) \right]= \sum_{i,j=1}^d x_{ij} \Tr\big[\map{E}(E_{ij})\big]= \sum_{i,j=1}^d x_{ij} \delta_{ji}= \sum_{i=1}^d x_{ii} = \Tr[X]\, ,
\]
thus \( \map{E} \) is trace-preserving, as desired.
\eprf
\blem\label{lemma:block_matrix_trace}
Let \( X = \sum_{i,j=1}^d E_{ij} \otimes X_{ij} \in \A \otimes \B \) be a block matrix. Then $\Tr[X_{ij}] = \left(\Tr_B[X]\right)_{ij}$.

\elem

\bprf
Indeed,
\[
\begin{aligned}
\left(\Tr_B[X]\right)_{ij} 
&= \left( \Tr_B \left[ \sum_{i,j=1}^d E_{ij} \otimes X_{ij} \right] \right)_{ij} = \left( \sum_{i,j=1}^d \Tr[X_{ij}] E_{ij} \right)_{ij}= \Tr[X_{ij}]\, ,
\end{aligned}
\]
as desired.
\eprf

\section{Proof of the Partial Transpose Theorem}
\bt[The Partial Transpose Theorem]
Let $\rho_{AB}\in \A\otimes \B$ be a bipartite density operator. Then there exists a quantum channel $\E:\A\to \B$ such that 
\be \label{PTT2X}
\rho_{AB}^{T_B}=\E\star \rho_A \quad \text{and} \quad \rho_{AB}^{T_A}=\overline{\E}\star \rho_A^T\, .
\ee
where $\overline{\E}=T\circ \E\circ T$. In particular, if $\rho_A=\Tr_B[\rho_{AB}]$ is of full rank, then the channel $\E$ is unique.
\et

\bprf
We first show that there exists a linear map $\E:\A\to \B$ such that 
\be \label{MNXEQCX57}
\rho_{AB}^{T_B}=\E\star \rho_A\, ,
\ee
and then we will show that there exists such an $\E$ that is completely positive (CP) and trace-preserving (TP). For this, first note that since
\[
\E\star \rho_A= \frac{1}{2}\Big\{\rho_A\otimes \mathds{1},\J[\E]\Big\}\, ,
\]
it follows that there exists a linear map $\E$ satisfying Eq.~(\ref{MNXEQCX57}) if and only if $X=\J[\E]$ is a solution to the matrix equation
\be \label{SYLXVSTRX67}
(\rho_A\otimes \mathds{1})X+X(\rho_A\otimes \mathds{1})=2T_B(\rho_{AB})\, .
\ee
Now note that $X$ is a solution of Eq.~(\ref{SYLXVSTRX67}) if and only if $Z$ is a solution of the equation 
\be \label{AUXEQX87}
(\Lambda\otimes \mathds{1})Z+Z(\Lambda\otimes \mathds{1})=D\, ,
\ee
where $\Lambda$ is a diagonalization of $\rho_A$, so that $\Lambda=U\rho_AU^{\dag}$ for some unitary $U\in \A$, $Z=\Phi(X)$, $\Lambda\otimes\mathds{1}=\Phi(\rho_A\otimes\mathds{1})$ and $D=\Phi(2T_B(\rho_A))$, where $\Phi=T_A\circ\Ad_{(U\otimes\mathds{1})}$. By Lemma~\ref{a}, it follows that
\[
D=((U^\dagger)^T\otimes\mathds{1})2\rho_{AB}^T(U^T\otimes\mathds{1})\, ,
\]
and by Lemma~\ref{lemma:partial_transpose_property}, it follows that $D$ satisfies the conditions of Lemma~\ref{lemma:matrix_equation_solution}, which ensures that there indeed exists a solution $Z$ of Eq.~(\ref{AUXEQX87}). Moreover, for every such solution $Z$ it follows that $\E=\J^{-1}[\Phi^{-1}(Z)]$ is a solution of Eq.~(\ref{MNXEQCX57}). In the case that $\rho_A$ is of full rank, it follows from Item~\ref{LZXS3} of Lemma~\ref{lemma:matrix_equation_solution} that Eq.~(\ref{AUXEQX87}) admits a unique solution $Z$, from which it follows that Eq.~(\ref{MNXEQCX57}) admits a unique solution $\E=\J^{-1}[\Phi^{-1}(Z)]$. 

We now show that we can always find a solution $Z$ to Eq.~(\ref{AUXEQX87}) such that $\E=\J^{-1}[\Phi^{-1}(Z)]$ is completely positive (CP) and trace-preserving (TP).
\\

\noindent \textbf{Completely Positivity.} We will now show there exists a solution $Z$ of \eqref{AUXEQX87} such that $\E=\mathscr{J}^{-1}[\Phi^{-1}(Z)]$ is completely positive. Since a linear map $\E$ is completely positive if and only if its Choi matrix $\mathscr{C}[\E]$ is positive semi-definite, we will now show there exists a solution $Z$ such that $\mathscr{C}[\E]\geq0$, where $\E=\mathscr{J}^{-1}[\Phi^{-1}(Z)]$. For this, first note that $\E=\mathscr{J}^{-1}[\Phi^{-1}(Z)]$ implies $Z=((U^\dagger)^T \otimes \mathds{1})\Choi[\map{E}](U^T \otimes \mathds{1})$. Now according to Item~\ref{LZXS1} of Lemma~\ref{lemma:matrix_equation_solution}, $Z$ is a solution of Eq.~(\ref{AUXEQX87}) if and only if it is of the form
\be\label{CP1}
Z
= \sum_{\{i,j|\lambda_i + \lambda_j \neq 0\}} E_{ij} \otimes \frac{D_{ij}}{\lambda_i + \lambda_j} + \sum_{\{i,j|\lambda_i + \lambda_j = 0\}} E_{ij} \otimes W_{ij}\, .
\ee
By Item~\ref{LZXS2} of Lemma~\ref{lemma:matrix_equation_solution}, the matrix $\sum_{\{i,j|\lambda_i + \lambda_j \neq 0\}} E_{ij} \otimes \frac{D_{ij}}{\lambda_i + \lambda_j}\geq0$, thus by imposing $W_{ii}\geq0$ and $W_{ij}=0$ for all $i\neq j$, we then have 
\[\sum_{\{i,j|\lambda_i+\lambda_j=0\}}E_{ij}\otimes W_{ij}=\sum_{\{i|\lambda_i=0\}}E_{ii}\otimes W_{ii}\geq0\, ,\]
thus
\be \label{CPXS1}
Z=(U^T \otimes \mathds{1})\Choi[\map{E}](U^T \otimes \mathds{1})^\dagger 
= \underbrace{\sum_{\{i,j|\lambda_i + \lambda_j \neq 0\}} E_{ij} \otimes \frac{D_{ij}}{\lambda_i + \lambda_j}}_{\geq 0} 
+ \underbrace{\sum_{\{i,j|\lambda_i + \lambda_j = 0\}} E_{ij} \otimes W_{ij}}_{\geq 0} \geq 0\, .
\ee 
Since $U$ is unitary, it then follows that $\Choi[\map{E}] \geq 0$, as desired.
\\

\noindent \textbf{Trace Preserving.}  By Lemma~\ref{lemma:trace_preserving_map} we have that $\E$ is trace-preserving if and only if $\Tr_B\big[\mathscr{C}[\E]\big]=\mathds{1}$, thus we now prove $\Tr_B\big[\mathscr{C}[\E]\big]=\mathds{1}$. By Eq.~(\ref{CPXS1}), we have 
\be\label{TP1}
((U^\dagger)^T \otimes \mathds{1})\Choi[\map{E}](U^T \otimes \mathds{1}) 
= \sum_{\{i,j|\lambda_i + \lambda_j \neq 0\}} E_{ij} \otimes \frac{D_{ij}}{\lambda_i + \lambda_j} + \sum_{\{i,j|\lambda_i + \lambda_j = 0\}} E_{ij} \otimes W_{ij}\, ,
\ee
where we recall that we have imposed the conditions $W_{ii}=0$ and $W_{ij}=0$ for all $i\neq j$. We now further impose the condition $\Tr_B[W_{ii}]=1$. Now by taking the partial trace $\Tr_B$ of both sides of Eq.~(\ref{TP1}) and applying Lemma~\ref{lemma:partial-trace-unitary} we then have
\begin{align*}
    (U^\dagger)^T\mathscr{C}[\E]U^T&=\sum_{\{i,j|\lambda_i+\lambda_j\neq0\}}\Tr\left[\frac{D_{ij}}{\lambda_i+\lambda_j}\right]E_{ij}+\sum_{\{i|\lambda_i0\}}\Tr\left[W_{ii}\right]E_{ii}\\
    &=\sum_{\{i,j|\lambda_i+\lambda_j\neq0\}}\frac{1}{\lambda_i+\lambda_j}\Tr\left[D_{ij}\right]E_{ij}+\sum_{\{i|\lambda_i0\}}E_{ii}\\
    &=\sum_{\{i,j|\lambda_i+\lambda_j\neq0\}}\frac{1}{\lambda_i+\lambda_j}\left(\Tr_B[D] \right)_{ij}E_{ij}+\sum_{\{i|\lambda_i0\}}E_{ii}\tag{Lemma~\ref{lemma:block_matrix_trace}}\\
    &=\sum_{\{i,j|\lambda_i+\lambda_j\neq0\}}\frac{1}{\lambda_i+\lambda_j}\left(\Tr_B\left[((U^\dagger)^T\otimes\mathds{1})2\rho_{AB}^T(U^T\otimes \mathds{1})\right] \right)_{ij}E_{ij}+\sum_{\{i|\lambda_i0\}}E_{ii}\\
    &=\sum_{\{i,j|\lambda_i+\lambda_j\neq0\}}\frac{2}{\lambda_i+\lambda_j}\left(\Tr_B\circ T\left((U\otimes\mathds{1})\rho_{AB}(U^\dagger\otimes \mathds{1})\right) \right)_{ij}E_{ij}+\sum_{\{i|\lambda_i0\}}E_{ii}\\
    &=\sum_{\{i,j|\lambda_i+\lambda_j\neq0\}}\frac{2}{\lambda_i+\lambda_j}\left(T\circ\Tr_B \left((U^\dagger\otimes\mathds{1})\rho_{AB}(U^\dagger\otimes \mathds{1})\right) \right)_{ij}E_{ij}+\sum_{\{i|\lambda_i0\}}E_{ii}\tag{Lemma~\ref{TTB}}\\
    &=\sum_{\{i,j|\lambda_i+\lambda_j\neq0\}}\frac{2}{\lambda_i+\lambda_j}\left(T \left(U\Tr_B[\rho_{AB}]U^\dagger\right) \right)_{ij}E_{ij}+\sum_{\{i|\lambda_i0\}}E_{ii}\tag{Lemma~\ref{lemma:partial-trace-unitary}}\\
    &=\sum_{\{i,j|\lambda_i+\lambda_j\neq0\}}\frac{2}{\lambda_i+\lambda_j}\left(T(U \rho_A U^\dagger) \right)_{ij}E_{ij}+\sum_{\{i|\lambda_i0\}}E_{ii}\\
    &=\sum_{\{i,j|\lambda_i+\lambda_j\neq0\}}\frac{2}{\lambda_i+\lambda_j}\left(\Lambda^T \right)_{ij}E_{ij}+\sum_{\{i|\lambda_i0\}}E_{ii}\\
    &=\sum_{\{i,j|\lambda_i+\lambda_j\neq0\}}\frac{2}{\lambda_i+\lambda_j}\left(\Lambda \right)_{ij}E_{ij}+\sum_{\{i|\lambda_i0\}}E_{ii}\\
    &=\sum_{\{i|\lambda_i\neq0\}}\frac{2\lambda_i}{2\lambda_i}E_{ii}+\sum_{\{i|\lambda_i0\}}E_{ii}\\
    &=\sum_{i=1}^dE_{ii}\\
    &=\mathds{1}\, .
\end{align*}
Since $U$ is unitary, it then follows that $\Tr_B\big[\mathscr{C}[\E]\big]=\mathds{1}$, as desired.

Now if $\rho_A=\Tr_B(\rho_{AB})$ is of full rank, then by Item~\ref{LZXS3} of Lemma~\ref{lemma:matrix_equation_solution} we have that the solution Z of Eq.~(\ref{AUXEQX87}) is unique, positive semi-definite and takes the form
\[Z=\sum_{i,j=1}^dE_{ij}\otimes\frac{D_{ij}}{\lambda_i+\lambda_j}\geq0\, .\]
Since $Z=((U^\dagger)^T\otimes\mathds{1})\mathscr{C}[\E](U^T\otimes\mathds{1})$, we have 
\be\label{UNI}
((U^\dagger)^T\otimes\mathds{1})\mathscr{C}[\E](U^T\otimes\mathds{1})=\sum_{i,j=1}^dE_{ij}\otimes\frac{D_{ij}}{\lambda_i+\lambda_j}\, .
\ee
Since $U$ is unitary, it follows that $\mathscr{C}[\E]\geq0$, i.e., $\E$ is also completely positive. Similarly, taking the partial trace $\Tr_B$ of both sides of Eq.~(\ref{UNI}) and then simplifying yields 
\[
(U^\dagger)^T\Tr_B\big[\mathscr{C}[\E]\big]U^T=\mathds{1}\, ,
\]
thus $\Tr_B\big[\mathscr{C}[\E]\big]=\mathds{1}$, i.e., $\E$ is also trace-preserving.
\\
\\
Finally, since we know that there exists a CPTP map $\E:\A\to \B$ such that 
\[
\rho_{AB}^{T_B}=\E\star \rho_A\, ,
\]
it follows that
\begin{align*}
\rho_{AB}^{T_A}&=(\E\star \rho_A)^T=\sum_{i,j}^d (E_{ij}\rho_A+\rho_AE_{ij})^T\otimes \E(E_{ji})^T \\
&=\sum_{i,j}^d(E_{ji}\rho_A^T+\rho_A^TE_{ji})\otimes (T\circ \E\circ T)(E_{ij}) \\
&=(T\circ \E\circ T)\star \rho_A^T \\
&=\overline{\E}\star \rho_A^T\, , 
\end{align*}
where $\overline{\E}=T\circ \E\circ T$. Moreover, since 
\[
\mathscr{C}[\overline{\E}]=\mathscr{C}[T\circ \E\circ T]=\sum_{i,j}E_{ij}\otimes (T\circ \E\circ T)(E_{ij})=\sum_{i,j}E_{ij}\otimes \E(E_{ji})^{T}=\sum_{i,j}E_{ji}^{T}\otimes \E(E_{ji})^{T}=\mathscr{C}[\E]^T\geq 0\, ,
\]
it follows that $\overline{\E}$ is completely positive. Moreover, since $\E$ is trace-preserving and the transpose is trace-preserving, it follows that $\overline{\E}$ is also trace-preserving, thus $\overline{\E}$ is CPTP. This completes the proof.
\eprf

\end{document}